\newcommand*\rel@kern[1]{\kern#1\dimexpr\macc@kerna}
\newcommand*\widebar[1]{%
  \begingroup
  \def\mathaccent##1##2{%
    \rel@kern{0.8}%
    \overline{\rel@kern{-0.8}\macc@nucleus\rel@kern{0.2}}%
    \rel@kern{-0.2}%
  }%
  \macc@depth\@ne
  \let\math@bgroup\@empty \let\math@egroup\macc@set@skewchar
  \mathsurround\z@ \frozen@everymath{\mathgroup\macc@group\relax}%
  \macc@set@skewchar\relax
  \let\mathaccentV\macc@nested@a
  \macc@nested@a\relax111{#1}%
  \endgroup
}
\newtheorem{theorem}{Theorem}
\newtheorem{lemma}[theorem]{Lemma}
\newtheorem{corollary}[theorem]{Corollary}
\newtheorem{definition}[theorem]{Definition}
\theoremstyle{definition}
\newtheorem{open}{Open Problem}
\newcommand{\be}{\begin{equation}}
\newcommand{\ee}{\end{equation}}
\newcommand{\B}{\{0,1\}}
\newcommand{\tOmega}{\tilde{\Omega}}
\DeclareMathOperator{\R}{R}
\DeclareMathOperator{\Q}{Q}
\DeclareMathOperator{\Dom}{Dom}
\DeclareMathOperator{\poly}{poly}
\DeclareMathOperator{\bR}{\mathbb{R}}
\DeclareMathOperator{\cost}{cost}
\DeclareMathOperator{\err}{err}
\newcommand{\triv}{\textsc{Triv}}
\newcommand{\bN}{\mathbb{N}}
\newcommand{\bE}{\mathbb{E}}
\newcommand{\eq}[1]{\hyperref[eq:#1]{(\ref*{eq:#1})}}
\renewcommand{\sec}[1]{\hyperref[sec:#1]{Section~\ref*{sec:#1}}}
\newcommand{\thm}[1]{\hyperref[thm:#1]{Theorem~\ref*{thm:#1}}}
\newcommand{\lem}[1]{\hyperref[lem:#1]{Lemma~\ref*{lem:#1}}}
\newcommand{\defn}[1]{\hyperref[def:#1]{Definition~\ref*{def:#1}}}
\newcommand{\prop}[1]{\hyperref[prop:#1]{Proposition~\ref*{prop:#1}}}
\newcommand{\cor}[1]{\hyperref[cor:#1]{Corollary~\ref*{cor:#1}}}
\newcommand{\fig}[1]{\hyperref[fig:#1]{Figure~\ref*{fig:#1}}}
\newcommand{\tab}[1]{\hyperref[tab:#1]{Table~\ref*{tab:#1}}}
\newcommand{\alg}[1]{\hyperref[alg:#1]{Algorithm~\ref*{alg:#1}}}
\newcommand{\app}[1]{\hyperref[app:#1]{Appendix~\ref*{app:#1}}}
\newcommand{\conj}[1]{\hyperref[conj:#1]{Conjecture~\ref*{conj:#1}}}
\newcommand{\chap}[1]{\hyperref[chap:#1]{Chapter~\ref*{chap:#1}}}
\newcommand{\clm}[1]{\hyperref[clm:#1]{Claim~\ref*{clm:#1}}}
\newcommand{\fct}[1]{\hyperref[fct:#1]{Fact~\ref*{fct:#1}}}
\newcommand{\opn}[1]{\hyperref[opn:#1]{Open Problem~\ref*{opn:#1}}}
\begin{document}

\title{How symmetric is too symmetric for large quantum speedups?}

\author{
Shalev Ben{-}David\\
\small University of Waterloo\\
\small \texttt{shalev.b@uwaterloo.ca}
\and
Supartha Podder\\
\small University of Ottawa\\
\small \texttt{spodder@uottawa.ca}
}

\date{}
\maketitle

\begin{abstract}
Suppose a Boolean function $f$ is symmetric under a group action
$G$ acting on the $n$ bits of the input. For which $G$ does this mean
$f$ does not have an exponential quantum speedup? Is there a characterization
of how rich $G$ must be before the function $f$ cannot have enough structure
for quantum algorithms to exploit?

In this work, we make several steps towards understanding the group actions $G$
which are ``quantum intolerant'' in this way.
We show that sufficiently transitive
group actions do not allow a quantum speedup, and that a ``well-shuffling''
property of group actions -- which happens to be preserved by several natural
transformations -- implies a lack of super-polynomial speedups for functions
symmetric under the group action. Our techniques are motivated by
a recent paper by Chailloux (2018), which deals with the case where $G=S_n$.

Our main application is for graph symmetries: we show that any Boolean
function $f$ defined on the adjacency matrix of a graph (and symmetric under
relabeling the vertices of the graph) has a power $6$ relationship between
its randomized and quantum query complexities, even if $f$ is a partial function.
In particular, this means no graph property testing problems can have super-polynomial
quantum speedups, settling an open problem of Ambainis, Childs, and Liu (2011).
\end{abstract}


\section{Introduction}

One of the most fundamental questions in the field of quantum
computing is the question of when quantum algorithms
substantially outperform classical ones.
While polynomial quantum speedups are known in
many settings, super-polynomial quantum speedups are known
(or even merely conjectured) for only a few select problems.
An important lesson in the field has been that exponential
quantum speedups only occur for certain ``structured'' problems:
problems such as period-finding (used in Shor's factoring algorithm
\cite{Sho97}), or Simon's problem \cite{Sim97}, in which the input
is known in advance to have a highly restricted form. In contrast,
for ``unstructured'' problems such as blackbox search or
$\mathsf{NP}$-complete problems,
only polynomial speedups are known (and in some models,
it can be formally shown that only polynomial speedups are possible).

In this work, we are interested in formalizing and characterizing
the structure necessary for fast quantum algorithms; in particular,
we study the \emph{types of symmetries} a Boolean function can have
while still exhibiting super-polynomial quantum speedups.

\subsection{Previous work}

Despite the strong intuition in the field that structure is necessary
for exponential quantum speedups, only a handful of works have attempted
to formalize this intuition and characterize the necessary structure.
All of them study the problem in the query complexity (black-box) model
of quantum computation, which is a natural framework in which
both period-finding and Simon's problem can be formally shown to
give exponential quantum speedups (see \cite{BdW02} for a survey
of query complexity,
or \cite{Cle04} for a formalization of period-finding specifically).

In the query complexity model, the goal is to compute
a Boolean function $f:\Sigma^n\to\B$ using as few queries to
the bits of the input $x\in\Sigma^n$ as possible. Here $\Sigma$ is some
finite alphabet, and each query specifies an index $i\in[n]$
and receives the response $x_i\in\Sigma$. A query algorithm,
which may depend on $f$ but not on $x$,
must output $f(x)$ (to worst-case bounded error) after as few
queries as possible. Quantum query algorithms are allowed
to make queries in superposition; we are interested in
how much advantage this gives them over randomized classical
algorithms (for a formal definition of these notions, see \cite{BdW02}).

Beals, Buhrman, Cleve, Mosca, and de Wolf \cite{BBC+01} showed
that all \emph{total Boolean functions} $f:\Sigma^n\to\B$
have a polynomial relationship
between their classical and quantum query complexities
(which we denote $\R(f)$ and $\Q(f)$ respectively).
This means that super-polynomial speedups are not possible in
query complexity unless we impose a promise on the input: that is,
unless we define $f:P\to\B$ with $P\subseteq\Sigma^n$, and allow
an algorithm computing $f$ to behave arbitrarily on inputs
outside of the promise set $P$. For such promise problems
(also called partial functions), provable exponential quantum speedups
are known; this is the setting in which Simon's problem and
period-finding reside.

The question, then, is whether we can say anything about the structure
necessary for a partial Boolean function $f$ to exhibit
a super-polynomial quantum speedup. Towards this end,
Aaronson and Ambainis \cite{AA14} showed
that \emph{symmetric} functions do not allow super-polynomial quantum
speedups, even with a promise. Chailloux \cite{Cha18} improved
this result by improving the degree of the polynomial
relationship between randomized and quantum algorithms for symmetric
functions, and removing a technical requirement on the symmetry
of those functions.\footnote{
Aaronson and Ambainis required the function
to be symmetric both under permuting the $n$ bits of the input,
and under permuting the alphabet symbols in $\Sigma$;
Chailloux showed that the latter is not necessary.}

Other work attempted to characterize the structure necessary for
quantum speedups in other ways. \cite{Ben16} Showed that certain
types of symmetric promises do not admit any function with a
super-polynomial quantum speedup, a generalization of \cite{BBC+01}
(who showed this when the promise set is $\Sigma^n$).
\cite{AB16} Showed that small promise sets, which contain
only $\poly(n)$ inputs out of $|\Sigma|^n$, also do not admit
functions which separate quantum and classical algorithms by
more than a polynomial factor.

\subsection{Our contributions}

In this work, we extend the results of Aaronson-Ambainis and
Chailloux to other symmetry groups. To state our results, we introduce
the following definition.

\begin{definition}
Let $f:P\to\B$ be a function with $P\subseteq\Sigma^n$,
where $\Sigma$ is a finite alphabet and $n\in\bN$. We say
that $f$ is \emph{symmetric} with respect to a group action $G$
acting on domain $[n]$ if for all $x\in P$ and all $\pi\in G$,
the string $x\circ\pi$ defined by $(x\circ\pi)_i\coloneqq x_{\pi(i)}$
satisfies $x\circ\pi\in P$ and $f(x\circ\pi)=f(x)$.
\end{definition}

This definition allows us to talk about more general symmetries
of a Boolean function. The case where $G=S_n$ is the fully-symmetric
group action is the one handled in Chailloux's work \cite{Cha18}:
he showed that $\R(f)=O(\Q(f)^3)$ if $f$ is symmetric under $S_n$.
Aaronson and Ambainis \cite{AA14}
required an even stronger symmetry property.
We note that when $\Sigma$ is large, say $|\Sigma|=n$ or larger,
the class of functions symmetric under $S_n$ is already highly
nontrivial: among others,
it includes functions such as $\textsc{Collision}$,
an important function whose quantum query complexity
was established in \cite{AS04}; $k$-$\textsc{Sum}$,
whose quantum query complexity required the negative-weight
adversary to establish \cite{BS13}; and
$k$-$\textsc{Distinctness}$, whose quantum query complexity is
still open \cite{BKT18}. Additionally, computational
geometry functions such as $\textsc{ClosestPair}$
(a function studied in recent work by Aaronson,
Chia, Lin, Wang, and Zhang \cite{ACL+19}) are typically symmetric
under $S_n$ as well, as the points are usually represented as alphabet
symbols. If we round the alphabet to be finite, the $G=S_n$
case already shows that no computational geometry problem
of this form can have super-cubic quantum speedups.

In this work, we examine what happens when we
relax the full symmetry $S_n$ to smaller symmetry groups $G$.
We introduce some tools for showing that particular classes
of group actions $G$ do not allow super-polynomial quantum speedups;
that is, we provide tools for showing that every $f$ symmetric
with respect to $G$ satisfies $\Q(f)=\R(f)^{\Omega(1)}$.
Our primary application is the following theorem,
in which $G$ is the \emph{graph symmetry}:
the group action acting on strings of length $\binom{k}{2}$
(which represent the possible edges of a graph), which includes
all permutations of the edges which are induced by one of the $k!$
relabelings of the $k$ vertices. Functions which take in
the adjacency matrix of a graph as input, and whose output
depends only on the graph (and not on the labeling of its vertices),
are always symmetric with respect to the graph symmetry $G$.

\begin{theorem}[Informal]\label{thm:main_graph}
Any Boolean function $f$ defined on the adjacency matrix of
a graph (and symmetric with respect to renaming the vertices of the
graph) has a polynomial (power $6$) relationship between
$\R(f)$ and $\Q(f)$. This holds even if $f$ is a partial function.
\end{theorem}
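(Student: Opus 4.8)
The plan is to prove \thm{main_graph} as the main application of a general framework, rather than by a bespoke argument for graphs. The framework I would build has two ingredients. First, a robust generalization of Chailloux's theorem: instead of proving $\R(f)=O(\Q(f)^3)$ only for $f$ symmetric under the full symmetric group $S_n$, I would isolate the exact property of the $S_n$-action that Chailloux's proof uses and promote it to a standalone hypothesis on an arbitrary group action $G$ acting on $[n]$ --- the ``well-shuffling'' property advertised in the abstract. Informally, $G$ is well-shuffling if a uniformly random $\pi\in G$, applied to any fixed small set of coordinates, spreads those coordinates out enough that a few quantum queries cannot lock onto a distinguishing substructure that a randomized algorithm could not also find. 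The target lemma is: if $G$ is well-shuffling with quality parameter $t$, then every $f$ symmetric under $G$ satisfies $\R(f)=O(\Q(f)^{c(t)})$ for an explicit exponent $c(t)$.

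Second ingredient: closure properties. The graph symmetry on the $n=\binom{k}{2}$ edge-slots is exactly the \emph{induced action on unordered pairs} of the tautological action of $S_k$ on the $k$ vertices. The tautological $S_k$-action on $[k]$ is perfectly well-shuffling --- it is the full symmetric group, which is the Aaronson--Ambainis/Chailloux base case. So I would prove a closure lemma: if a $G$-action on a ground set is well-shuffling, then the induced $G$-action on the $2$-subsets of that set is also well-shuffling, with a controlled degradation of the quality parameter. Chaining the base case through this closure lemma certifies that the graph symmetry is well-shuffling, and then the generalized Chailloux lemma finishes the argument.

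The exponent bookkeeping is where the power $6$ appears, and it is instructive. A quantum algorithm for a graph property makes $q=\Q(f)$ edge-queries, which touch at most $2q$ vertices; conversely, a classical simulation must ``pay'' per vertex rather than per edge, and the dictionary between the $k$-vertex ground set and the $\binom{k}{2}$-edge slot space is quadratic. This quadratic dictionary is what multiplies Chailloux's exponent $3$ by a factor of $2$, yielding $\R(f)=O(\Q(f)^6)$. Tracking this factor cleanly through the closure lemma --- making sure the degradation of the well-shuffling quality under the pairs-transformation is exactly a squaring and not worse --- is the delicate part of the accounting.

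The main obstacle is that the graph symmetry is genuinely weaker than full symmetry on the edge-slots, so the ``sufficiently transitive implies no speedup'' half of the toolkit cannot be invoked directly. Concretely, the edge-action of $S_k$ is transitive on single edges but \emph{not} $2$-transitive on pairs of edges: two edges sharing a vertex can never be mapped to two vertex-disjoint edges, since a vertex relabeling preserves the incidence pattern. These persistent correlations between edges that share an endpoint are exactly what a putative fast quantum algorithm might try to exploit, and the entire content of the well-shuffling/closure argument is to show it cannot. The hard technical step is therefore the closure lemma: proving that even though the pairs-action fails to be highly transitive, the full symmetry of the underlying \emph{vertex} relabeling still scrambles edge-queries enough that the generalized Chailloux bound goes through. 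I expect this to require carefully controlling, for a random vertex permutation, the joint distribution of where a small set of queried edges lands, and showing that the shared-vertex correlations contribute only a bounded (squaring) loss to the shuffling quality.
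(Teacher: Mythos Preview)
Your high-level plan matches the paper's almost exactly: isolate a ``well-shuffling'' property of a group action $G$, prove a general lemma that well-shuffling implies $\R(f)=O(\Q(f)^{c})$ for all $f$ symmetric under $G$, establish the base case $G=S_k$ with exponent $3$, and then prove a closure lemma under passing to the induced action on pairs, picking up one factor of $2$ in the exponent to land at $6$.

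The one place your proposal diverges from the paper is in where you locate the difficulty, and this stems from leaving the definition of ``well-shuffling'' vague. You describe it informally as a spreading/mixing property of random $\pi\in G$ on small coordinate sets, and accordingly you expect the closure lemma to be the hard step, requiring control of the joint distribution of edge-images under a random vertex permutation and careful handling of the shared-endpoint correlations. The paper instead makes a very specific choice: $G$ is well-shuffling if the concrete quantum query problem ``distinguish a string in $G\subseteq[n]^n$ from a string in $D_{n,r}$ (the set of functions $[n]\to[n]$ with range $\le r$)'' has cost $r^{\Omega(1)}$. With this definition, both halves of the argument become clean. The general lemma is exactly Chailloux's simulation: if $Q$ computes $f$ in $T$ queries, then $Q$ cannot tell $x\circ\pi$ (for random $\pi\in G$) from $x\circ\alpha$ (for $\alpha$ from the hard small-range distribution) unless $T\gtrsim\cost(G,r)$, so a classical algorithm can sample $\alpha$, query the $\le r$ relevant positions of $x$, and simulate $Q$. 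And the closure lemma becomes a one-line reduction rather than a distributional analysis: an algorithm distinguishing the pairs-action from $D_{n(n-1),r^2}$ can be run against $G$ versus $D_{n,r}$ by answering each pair-query $(i,j)$ with two ground-set queries, since a range-$r$ function on $[n]$ induces a range-$\le r^2$ function on pairs. The squaring $r\mapsto r^2$ is exactly your ``quadratic dictionary,'' and it falls out with no further work. (The paper actually routes through ordered pairs first---directed graphs as $S_k^{\langle 2\rangle}$---and then quotients by $\{(x,y),(y,x)\}$ to reach undirected edges, but this is cosmetic.) So your instinct that shared-vertex correlations are the obstacle is not wrong as a statement about why transitivity-based arguments fail, but with the right definition of well-shuffling those correlations simply never enter the proof.
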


(For a formal version of this theorem, see \cor{formal_graph}.)

This theorem holds even when the alphabet of $f$ is non-Boolean.
We note that this is a strict generalization of the result
of Aaronson and Ambainis \cite{AA14}, since any fully-symmetric
function will necessarily be symmetric under the graph
symmetry as well. It is also a generalization of Chailloux \cite{Cha18},
except that our polynomial degree (power $6$) is larger than
the power $3$ of Chailloux.

This theorem also settles an open problem of Ambainis, Childs, and Liu
\cite{ACL11} at least for the adjacency matrix version of
graph property testing. They asked whether there is any
graph property testing problem with an exponential quantum speedup
over the best possible classical algorithm; our theorem
implies the answer is no.\footnote{An alternative version of
graph property testing is called the adjacency list model,
in which it is possible to directly query the list of neighbors
of each vertex. Query functions on adjacency list
graphs can achieve large quantum speedups -- for example,
the glued trees problem \cite{CCD+03} -- but it is still
open whether \emph{property testing} graph problems
in the adjacency list model can achieve exponential quantum speedups.
We conjecture that such exponential speedups for graph property
testing \emph{do} exist -- perhaps by a modification of
the glued trees problem.}
Indeed, any graph property testing problem is always symmetric
under the graph symmetry group action $G$, which means that
all graph property testing problems satisfy a power $6$ relationship
between their quantum and classical query complexities.

Our tools apply to other group actions as well. We show that
highly transitive group actions also are not consistent with
exponential quantum speedups.

\begin{theorem}[informal]\label{thm:main_transitive}
Let $G$ be a $n^{\Omega(1)}$-transitive group action on $[n]$,
and let $f$ be a (possibly partial) Boolean function
on strings of length $n$ which is symmetric
under $G$. Then $\Q(f)=\R(f)^{\Omega(1)}$.
\end{theorem}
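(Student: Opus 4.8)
The plan is to reduce the $n^{\Omega(1)}$-transitive case to Chailloux's fully-symmetric theorem \cite{Cha18} by combining the polynomial method with a symmetrization argument. Write $k$ for the transitivity parameter, so $k \ge n^c$ for some constant $c > 0$, and set $Q = \Q(f)$ and $d = 2Q$. I first dispose of the regime where $Q$ is large: if $Q \ge n^c/2$ then $n \le (2Q)^{1/c}$, and since trivially $\R(f) \le n$ (just query every coordinate and use the promise), this already gives $\R(f) \le (2Q)^{1/c} = \poly(Q)$. So from now on I assume $d = 2Q < n^c \le k$.

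In this remaining regime the polynomial method gives a polynomial $p$ of degree at most $d$ with $|p(x) - f(x)| \le 1/3$ for every $x \in P$. I symmetrize it over the group, setting $\bar p(x) = \E_{\pi \in G}\bigl[p(x \circ \pi)\bigr]$. Because $f$ is symmetric under $G$, each $x \circ \pi$ lies in $P$ with $f(x \circ \pi) = f(x)$, so $\bar p$ still $\tfrac13$-approximates $f$ on $P$; moreover $\bar p$ is $G$-invariant and still has degree at most $d$.

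The crux is that this weak $G$-invariance is actually full $S_n$-invariance once the degree drops below $k$. Being $k$-transitive, $G$ is $j$-transitive for every $j \le k$. A $G$-invariant polynomial has equal coefficients on monomials in a common $G$-orbit, and for a degree-$j$ monomial with $j \le d < k$ this orbit is exactly the set of monomials of the same \emph{type} --- the same multiset of alphabet symbols placed across the chosen positions --- since $j$-transitivity sends any one such monomial to any other, while position permutations always preserve type. As type is also precisely what distinguishes the $S_n$-orbits, $\bar p$ is $S_n$-invariant: a symmetric polynomial, hence a function of the histogram of $x$ alone. This is where the hypothesis $k = n^{\Omega(1)}$ is really used; transitivity bootstraps the weak $G$-symmetry up to the full symmetry handled by Chailloux.

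It remains to turn a degree-$d$ symmetric polynomial that $\tfrac13$-approximates $f$ on $P$ into a randomized algorithm making $\poly(d)$ queries. Here I would invoke the classical sampling analysis of Chailloux \cite{Cha18} (generalizing Aaronson--Ambainis \cite{AA14}): sample $\poly(d)$ coordinates to estimate the histogram of $x$, then output the rounded value of $\bar p$ evaluated at that estimate. Correctness rests on the robustness of low-degree symmetric polynomials, namely that $\bar p$ changes by little under a small perturbation of the histogram (a Markov-type derivative bound), and I expect this to be the main technical obstacle, exactly as it is in \cite{AA14, Cha18}; note that this step uses nothing about $P$ beyond the approximation guarantee, so it is untroubled by $P$ being only $G$-invariant rather than $S_n$-invariant. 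Combining the two regimes yields $\R(f) = \poly(\Q(f))$, that is $\Q(f) = \R(f)^{\Omega(1)}$, with the hidden exponent depending on the transitivity exponent $c$.
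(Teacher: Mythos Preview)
Your symmetrization argument is correct and is a genuinely different organization from the paper's. The key observation---that when $G$ is $k$-transitive and $d<k$, the $G$-orbit of any degree-$d$ monomial in the $z_{ij}$ variables coincides with its $S_n$-orbit, so a $G$-invariant degree-$d$ polynomial is automatically $S_n$-invariant---is exactly right. The paper instead packages the transitivity argument inside its general ``well-shuffling'' framework: it shows (\thm{transitivity_transformation}) that $\cost(G,r)\ge\Omega(\min\{k,\cost(S_n,r)\})$ via a coefficient-bounding polynomial argument, and then invokes the generic simulation theorem (\thm{shuffle}) which converts a $G$-symmetric quantum algorithm into a classical one by running it on $x\circ\alpha$ for a small-range $\alpha$. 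The paper's route buys modularity (the same $\cost$ machinery handles graph symmetries, products, etc.), while yours is more direct for the transitive case and makes the reduction to the $S_n$ situation transparent at the level of the approximating polynomial.

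One caveat on your final step. You describe it as ``estimate the histogram, then evaluate $\bar p$ at the estimate, using a Markov-type robustness bound,'' attributing this to Chailloux. That is the Aaronson--Ambainis method, and it needed the extra alphabet-permutation symmetry; over a large alphabet the histogram has $|\Sigma|$ coordinates and the naive Markov argument does not obviously close. Chailloux's actual technique, as the paper explains, is the small-range-function simulation: one shows that the bounded degree-$d$ polynomial $q(\sigma)\coloneqq\bar p(x\circ\sigma)$ cannot separate uniform $\pi\in S_n$ from a hard distribution on $D_{n,r}$ once $r\gg d^3$ (this is the polynomial-method content of the collision lower bound), and since $\bar p$ is $S_n$-invariant one has $\bar p(x)=\bE_\pi[q(\pi)]\approx\bE_\alpha[q(\alpha)]$; a classical algorithm then samples $\alpha$, queries the $\le r$ positions in its range, computes $\bar p(x\circ\alpha)$, and averages over a few repetitions. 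Note also that you cannot invoke Chailloux's theorem as a black box, because his statement is about $\Q(f)$ for an $S_n$-symmetric $f$, whereas here $f$ and $P$ are only $G$-symmetric; what you actually have is an $S_n$-invariant bounded approximating polynomial, so you must reuse the \emph{technique} at the polynomial level rather than cite the theorem. With that adjustment your outline goes through.
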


(A formal version of this theorem can be found in \cor{formal_transitive}.)

We note that this theorem does not subsume the previous one,
as graph symmetry group actions are not even $2$-transitive.

Further, we are able to generalize our results to directed graph
symmetries, hypergraph symmetries, and bipartite graph symmetries. We also provide a reasonably clean framework
in which we prove these results, and show that various natural
operations on group actions preserve the
lack-of-exponential-quantum-speedup property.

Finally, we examine the other direction, and exhibit some classes
of group actions whose symmetries do allow exponential quantum speedups.
In particular, we show the following.

\begin{theorem}[informal]
\label{thm:large_order_no_speedup}
The order of the group action does not characterize whether it allows
exponential quantum speedups. In particular, there is a group
action $G_n$ on $[n]$ and a different group action $H_n$ on $[n]$
such that $|G_n|=|H_n|=n^{\Omega(n)}$, but every
function $f$ that's symmetric under $G$ has $\R(f)=O(\Q(f)^3)$
while there exists a function $f$ that's symmetric under $H$
and has $\Q(f)=O(1)$ and $\R(f)=n^{\Omega(1)}$.
\end{theorem}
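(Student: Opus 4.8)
The plan is to prove the two directions with two essentially independent constructions, both of order $n^{\Omega(n)}$, exploiting the fact that a group action can be huge simply because it acts richly on coordinates that are irrelevant to the hard part of the function. Thus the ``speedup-forbidding'' half comes from a genuinely well-shuffling group, while the ``speedup-allowing'' half comes from a large group whose bulk is wasted on dummy coordinates.

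For $G_n$ I would take the full symmetric group $S_n$ acting on $[n]$ in the natural way. Its order is $n!=n^{\Omega(n)}$, and by Chailloux's theorem \cite{Cha18} (quoted above) every $f$ symmetric under $S_n$ satisfies $\R(f)=O(\Q(f)^3)$, so this half requires no new work.

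For $H_n$ the idea is to embed a small instance of a problem with a constant-query quantum algorithm but polynomially large randomized query complexity -- the \textsc{Forrelation} problem is the natural choice, since Simon's problem has quantum query complexity that grows with the instance size rather than being $O(1)$ -- and then pad with a large symmetric group on dummy coordinates. Concretely, fix a small $\delta>0$, split $[n]=A\sqcup B$ with $|A|=N=\lceil n^{\delta}\rceil$ and $|B|=n-N$, place a \textsc{Forrelation} instance on the coordinates $A$ to obtain a partial function $g$ with $\Q(g)=O(1)$ and $\R(g)=N^{\Omega(1)}$, and define $f(x)\coloneqq g(x|_A)$, ignoring $B$ entirely. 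Taking $H_n$ to be the subgroup of $S_n$ that fixes $A$ pointwise and permutes $B$ arbitrarily, the function $f$ is trivially symmetric under $H_n$ (permuting $B$ never changes $f$), while $\Q(f)=\Q(g)=O(1)$ and $\R(f)\ge\R(g)=N^{\Omega(1)}=n^{\Omega(1)}$; here the inequality $\R(f)\ge\R(g)$ holds because hard-wiring the $B$-coordinates to arbitrary constants turns any randomized algorithm for $f$ into one for $g$ of the same cost. Since $|H_n|=(n-N)!=n^{\Omega(n)}$, both group actions have order $n^{\Omega(n)}$ as required.

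I expect two points to need the most care. First, one must confirm that embedding the gadget in a larger input and symmetrizing over the dummy block neither destroys the constant-query quantum upper bound nor the polynomial randomized lower bound; the quantum direction is immediate, and the classical direction follows from the sub-problem/hard-wiring argument, but both should be stated precisely. Second, the statement asks for $|G_n|=|H_n|$ exactly, whereas the two natural constructions give $n!$ and $(n-N)!$; matching the orders exactly is routine bookkeeping -- for instance one replaces $G_n=S_n$ by a product of full symmetric groups over a partition of $[n]$ whose block sizes are chosen so that $\prod_i (a_i!)$ equals $(n-N)!$, and one checks (via the product/operations machinery developed earlier in the paper) that such a product of full symmetric groups still forbids super-polynomial speedups. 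If desired, the example can be made to look less degenerate by also letting $H_n$ act on $A$ through the genuine affine symmetry group of the \textsc{Forrelation} instance, which has order only $N^{O(\log N)}$ and hence does not affect the leading $n^{\Omega(n)}$ count.
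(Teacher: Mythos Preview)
Your core construction is correct but differs from the paper's. For $G_n$ both you and the paper take $S_n$. For $H_n$, the paper does \emph{not} use dummy padding: it takes $f=\textsc{For}_{\sqrt{n}}\circ\triv_{\sqrt{n}}$, where $\triv_m$ is the partial function on $\{0^m,1^m\}$ returning the common bit, so that the $n$ input bits split into $\sqrt{n}$ blocks of size $\sqrt{n}$ and the symmetry group is $H_n=(S_{\sqrt{n}})^{\sqrt{n}}$, of order $n^{\Theta(n)}$. Your padding construction (Forrelation on $n^{\delta}$ coordinates, $S_{n-n^{\delta}}$ acting on the rest) is simpler and equally valid for the informal statement. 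The paper's version has the mild advantage that $H_n$ acts nontrivially on \emph{every} coordinate, so the example cannot be dismissed as ``the group is large only because it ignores the interesting part of the input''; your version is more transparent.

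One genuine error in your aside about matching $|G_n|=|H_n|$ exactly: you propose replacing $S_n$ by a direct product of full symmetric groups on a partition of $[n]$ and claim ``the product/operations machinery developed earlier in the paper'' shows such a product still forbids super-polynomial speedups. This is false, and indeed the paper's own $H_n=(S_{\sqrt{n}})^{\sqrt{n}}$ is \emph{precisely} such a product and is the witness that speedups \emph{are} possible. The product theorem in the paper concerns $G_1\times G_2$ acting on $[n_1]\times[n_2]\cong[n_1n_2]$, not on a disjoint union $[n_1+n_2]$; a disjoint-block product fixes each block setwise, so a function can place an arbitrary hard subproblem inside one block. Fortunately the informal statement should be read as ``both orders are $n^{\Omega(n)}$'' rather than literal equality (the formal version in the paper only proves the $H_n$ half and does not attempt to equalize orders), so this slip does not affect your main argument.
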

(A formal version of this theorem can be found in
\thm{large_order_no_speedup_formal}.)

This theorem says that even very large group actions may still be
consistent with exponential quantum speedups; to characterize
the group actions which do not allow super-polynomial quantum
advantage, we must use some richness measure other than the order
of the group action (and also other than transitivity,
as some $1$-transitive group actions allow exponential
quantum speedups and some don't). We leave such a characterization
as an intriguing open problem for future work.

\begin{open}\label{opn:oq1}
Is there a clean combinatorial characterization of the classes
of group actions $G$ that allow super-polynomial quantum speedups,
and the ones that don't? For example, is there a combinatorial
measure $M(G)$ such that any function $f$ symmetric under $G$
satisfies something like $\R(f)=\Q(f)^{O(M(G))}$, and also such
that there always exists a function $g$ symmetric under $G$
for which $\R(g)=\Q(g)^{\Omega(M(G))}$?
\end{open}

We view \opn{oq1} as an important direction for understanding
the nature of quantum speedups.
 
\subsection{Our techniques}

Our main tool is a simple observation from \cite{Cha18}.
Suppose that $f$ is symmetric under the full symmetric group action $S_n$.
Zhandry \cite{Zha13} showed that distinguishing
a random permutation from $S_n$ from a random small-range
function $\alpha:[n]\to[n]$ with $|\alpha([n])|=r$ requires
$\Omega(r^{1/3})$ quantum queries\footnote{Actually,
we will show that a version of Zhandry's result that is sufficient
for our purposes follows easily from the collision lower bound,
so his techniques are not necessary for our results.}.
Now, if $Q$ was a quantum
algorithm solving $f$ using $T$ queries, then $Q$ also outputs $f(x)$
on input $x\circ \pi$ (the input $x$ with bits shuffled according to
$\pi$) for any $\pi\in S_n$, since $f$ is symmetric under $S_n$.
In particular, $Q(x\circ\pi)$ for a random $\pi\in S_n$ still
outputs $f(x)$. However, the $T$-query algorithm $Q$ cannot
distinguish a random $\pi\in S_n$ from a random function
$\alpha:[n]\to[n]$ with range $r\approx T^3$; hence $Q$
must output $f(x)$ to constant error even when run on
$x\circ\alpha$ for a random small-range function $\alpha$.
This property can be used to simulate $Q$ classically: a classical
algorithm $R$ will simply sample a small-range function $\alpha$,
explicitly query the entire string $x\circ \alpha$ (possible
to do using $O(T^3)$ queries since $\alpha$ has range only $O(T^3)$),
and then simulate $Q$ on the string $x\circ\alpha$. This is
an $O(T^3)$-query classical algorithm for computing $f$,
created out of a $T$-query quantum algorithm for $f$.

The above trick can be generalized from the fully-symmetric group
action $S_n$ to any other group action $G$, so long as we can show
that it is hard for a $T$-query quantum algorithm to distinguish $G$ from
a small-range function with range $O(\poly(T))$.
The question of whether there exists an arbitrary symmetric function
$f$ with a quantum speedup is therefore reduced to the question
of whether the concrete task of distinguishing $G$ from
the set of all small-range functions can be done quickly
using a quantum algorithm. That is, if $D_{n,r}$ is the set
of all strings in $[n]^n$ which use only $r$ unique symbols,
then we care about the quantum query cost of distinguishing
$D_{n,r}$ from $G$; if this cost is $r^{\Omega(1)}$, then
no function which is symmetric under $G$ can exhibit a super-polynomial
quantum speedup. In this case, we call $G$ \emph{well shuffling}.
We show that the well-shuffling property is preserved under various
operations one might perform on a group action, and that these
operations allow us to prove many group actions are well-shuffling
simply by reduction to $S_n$.

\section{Preliminaries}

\subsection{Query complexity}

We start with introducing some standard notation from query complexity.
A \emph{Boolean function} will be a $\B$-valued function $f$ on
strings of length $n$, with $n\in\bN$. We will use $\Dom(f)$
to denote the domain of $f$, and we will always have
$\Dom(f)\subseteq\Sigma^n$ where $\Sigma$ is a finite alphabet.
The function $f$ is called \emph{total} if $\Dom(f)=\Sigma^n$,
and otherwise it is called \emph{partial}.

For a (possibly partial) Boolean function $f$, we use
$\R_\epsilon(f)$ to denote its \emph{randomized query complexity
to error $\epsilon$},
as defined in \cite{BdW02}. This is the minimum number of queries
required in the worst case by a randomized algorithm which computes
$f$ to worst-case error $\epsilon$. We use $\Q_\epsilon(f)$ to denote the
\emph{quantum query complexity to error $\epsilon$} of $f$,
also defined in \cite{BdW02}.
This is the minimum number of queries required in the worst case
by a randomized algorithm which computes $f$ to worst-case error
$\epsilon$. When $\epsilon=1/3$, we omit it and simply write
$\R(f)$ and $\Q(f)$.

An important tool for lower bounding query complexity
is the minimax theorem, the original version of which was given
by Yao for zero-error (Las Vegas) randomized algorithms \cite{Yao77}.
Here we will need a bounded-error, quantum version
of the minimax theorem. Bounded-error versions of the minimax
theorem can be shown using linear programming duality
(see also \cite{Ver98}
who proved a minimax theorem in the setting where both the error
and the expected query complexity are measured against the same hard
distribution).
A similar technique works for quantum query complexity;
this result is folklore, and we prove it in
\app{minimax}.

\begin{restatable}[Minimax for bounded error quantum algorithms]{lemma}{minimax}
\label{lem:minimax}
Let $f$ be a (possibly partial) Boolean function with
$\Dom(f)\subseteq\Sigma^n$, and let $\epsilon\in(0,1/2)$.
Then there is a distribution $\mu$ supported on $\Dom(f)$
which is \emph{hard} for $f$ in the following sense:
any quantum algorithm using fewer than $\Q_\epsilon(f)$
quantum queries for computing $f$ must have average error
$>\epsilon$ on inputs sampled from $\mu$.
\end{restatable}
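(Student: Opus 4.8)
The plan is to cast this as a two-player zero-sum game and invoke a minimax theorem, exactly as in the classical (Yao) setting but with quantum algorithms as the pure strategies of the minimizing player. Fix $T = \Q_\epsilon(f)-1$. Since any algorithm making fewer than $\Q_\epsilon(f)$ queries can be padded with dummy queries into a $T$-query algorithm, and extra queries only help, it suffices to produce a single $\mu$ that is hard against every $T$-query algorithm. Consider the game where the minimizer picks a $T$-query quantum algorithm $A$ (in canonical form, an alternation of input-independent unitaries and oracle queries followed by a two-outcome measurement), the maximizer picks a distribution $\mu$ on the finite set $\Dom(f)$, and the payoff is the average error $\E_{x\sim\mu}[\err(A,x)]$. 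This payoff is affine in $\mu$, and once we convexify the algorithm side by allowing classical mixtures it is affine there too. The crucial observation is that a classical mixture of $T$-query algorithms is itself realizable as a single $T$-query quantum algorithm: prepare the mixing weights as amplitudes in a control register, apply the controlled algorithms (each query still costs one oracle call, since the oracle acts only on the data register), and measure the control together with the output. Hence convexifying the minimizer's strategy space costs nothing in queries, and the best mixed $T$-query algorithm has the same worst-case error as the best pure one.

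With both strategy spaces convex, a minimax theorem yields
\[
\min_A \max_x \err(A,x) \;=\; \max_\mu \min_A \E_{x\sim\mu}[\err(A,x)],
\]
where on the right the inner minimum over pure algorithms suffices because the objective is affine in the mixing weights. By the definition of $\Q_\epsilon(f)$ and the choice $T=\Q_\epsilon(f)-1$, no $T$-query algorithm computes $f$ to worst-case error at most $\epsilon$, so the left-hand side exceeds $\epsilon$. The maximizing distribution $\mu^*$ on the right therefore satisfies $\min_A \E_{x\sim\mu^*}[\err(A,x)] > \epsilon$, which is precisely the hard distribution claimed: every $T$-query algorithm, and hence every algorithm using fewer than $\Q_\epsilon(f)$ queries, has average error strictly above $\epsilon$ on $\mu^*$.

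The step I expect to be the main obstacle is making the minimax rigorous, since the space of $T$-query algorithms is parametrized by unitaries on a workspace of a priori unbounded dimension and is not obviously compact. I would circumvent this by passing to finite-dimensional data: because $\Dom(f)$ is finite, each algorithm is summarized by its acceptance-probability vector $\big(\Pr[A\text{ accepts }x]\big)_{x\in\Dom(f)}\in[0,1]^{\Dom(f)}$, the error is an affine function of this vector, and the set of achievable vectors is bounded, so its closed convex hull is compact and convex. Running the game on this compact convex set lets me apply a standard minimax theorem (e.g.\ von Neumann's or Sion's) with no compactness concerns about the unitaries themselves. The only remaining care is the strictness of the inequality under this closure; this follows because $\Q_\epsilon(f)$ is an integer query count, so taking $T=\Q_\epsilon(f)-1$ forces the optimal worst-case error at $T$ queries to be bounded strictly above $\epsilon$, and that bound is preserved in the limit.
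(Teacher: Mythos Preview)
Your approach is essentially the same as the paper's: pass to the finite-dimensional vector of errors $(\err(A,x))_{x\in\Dom(f)}$, observe that mixtures of $T$-query algorithms are again $T$-query algorithms so this set is convex, and apply a minimax theorem. The paper also reduces to error vectors and invokes Sion's theorem in exactly this way.

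The one genuine gap is your handling of compactness and the strict inequality. You take the \emph{closed} convex hull of the achievable error vectors and then argue that strictness survives because ``$\Q_\epsilon(f)$ is an integer query count \dots\ and that bound is preserved in the limit.'' But all you know a priori is that every individual $T$-query algorithm has worst-case error $>\epsilon$; the infimum of such errors over all $T$-query algorithms could in principle equal $\epsilon$, in which case the closure would contain a vector with $\max_x v[x]=\epsilon$ and the minimax would only yield a $\mu$ with $\min_A \E_\mu[\err(A,x)]\ge\epsilon$, not the strict inequality the lemma claims. The integrality of the query count does not rule this out: it tells you no $T$-query algorithm \emph{attains} error $\le\epsilon$, not that the infimum is bounded away from $\epsilon$. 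The paper closes this gap by invoking a result of Barnum, Saks, and Szegedy that bounds the workspace dimension of $T$-query algorithms in terms of $n$ and $|\Sigma|$; this makes the set of $T$-query algorithms itself compact, so the set $V$ of error vectors is already closed (as a continuous image of a compact set), no closure operation is needed, and the minimum $\min_{v\in V}\|v\|_\infty$ is genuinely attained and hence strictly above $\epsilon$. You either need to cite such a workspace bound, or give an independent argument that the set of achievable error vectors for $T$-query algorithms is closed.
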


Note that achieving average error $\epsilon$ against a known distribution
$\mu$ is always easier than achieving worst-case error $\epsilon$; the minimax
theorem says that there is a hard distribution against which achieving average
error $\epsilon$ is just as hard as achieving worst-case error $\epsilon$.

\subsection{Group actions}

We review some basic definitions about group actions.

\begin{definition}[Group action]
A \emph{group action} is a pair $(D,G)$ where $D$ is a set
and $G$ is a set of bijections $\pi\colon D\to D$, such that
$G$ forms a group under composition (i.e.\ $G$ contains the identity
function and is closed under composition and inverse of the bijections).
We will often denote a group action simply by $G$, with the domain $D$
being implicit.
\end{definition}

In other words, a group action is simply a set of permutations
of a domain $D$ which is closed under composition and inverse.
In this work we will generally take $D=[n]$, where $[n]$ denotes
the set $\{1,2,\dots,n\}$ for $n\in\bN$. The set $[n]$ will
represent the indices of an input string, or equivalently,
the queries an algorithm is allowed to make.

We define orbits and transitivity of group actions, both of which are standard
definitions.

\begin{definition}[Orbit]
Let $G$ be a group action on domain $D$, and let $i\in D$.
Then the \emph{orbit} of $i$ is the set $\{\,\pi(i):\pi\in G\,\}$.
A subset of $D$ is an orbit of $G$ if it is the orbit
of some $i\in D$ with respect to $G$.
\end{definition}

\begin{definition}[Transitivity]
We say that a group action $G$ on domain $D$ is \emph{$k$-transitive} if
for all distinct $i_1,i_2,\dots,i_k\in D$ and distinct
$j_1,j_2,\dots,j_k\in D$, there exists some $\pi\in G$ such that
$\pi(i_t)=j_t$ for all $t=1,2,\dots,k$.
\end{definition}

\subsection{Symmetric functions}

We introduce some notation that will be used throughout this
paper to talk about symmetric functions.

\begin{definition}[Notation for permuting strings]
Let $\pi$ be a permutation on $[n]$, and let $x\in\B^n$.
We use $x\circ \pi$ to denote the string whose characters have been
permuted by $\pi$; that is, $(x\circ\pi)_i\coloneqq x_{\pi(i)}$.
More generally, $x\circ \pi$ is similarly defined when $\pi$
is merely a function $[n]\to[n]$ rather than a permutation.
\end{definition}

Note that if we view a string $x\in\Sigma^n$ as a function
$[n]\to\Sigma$ with $x(i)\coloneqq x_i$, then $x\circ \pi$ is simply
the usual function composition of $x$ and $\pi$.
This notation allows us to easily define symmetric functions.

\begin{definition}[Symmetric function]
Let $G$ be a group action on $[n]$, and let $f$ be a (possibly partial)
Boolean function with $\Dom(f)\subseteq\Sigma^n$.
We say $f$ is \emph{symmetric under} $G$
if for all $x\in\Dom(f)$ and all $\pi\in G$
we have $x\circ\pi\in\Dom(f)$ and $f(x\circ\pi)=f(x)$.
\end{definition}

In order for asymptotic bounds such as $\Q(f)=\R(f)^{\Omega(1)}$
to be well-defined, we actually need to talk about
\emph{classes} of functions rather than individual functions.
To do that, we will need to talk about classes of group actions.
We introduce the following definition, which defines, for a class
of group actions $\mathcal{G}$, the set of all functions symmetric
under some group action in $\mathcal{G}$. We denote this set by $F(\mathcal{G})$.

\begin{definition}[Class of symmetric functions]
Let $\mathcal{G}=\{G_i\}_{i\in I}$ be a (possibly infinite)
set of finite group actions, with $G_i$ acting on $[n_i]$
for each $i\in I$. Here $I$ is an arbitrary index set and $n_i\in\bN$
for all $i\in I$. Then define $F(\mathcal{G})$ to be the
set of all (possibly partial) Boolean functions that are symmetric
under some $G_i$. That is, we have $f\in F(\mathcal{G})$
if and only if
$f\colon\Dom(f)\to\B$ is a function with $\Dom(f)\subseteq[m]^n$
for some $n,m\in\bN$, and $f$ is symmetric under $G_i$ for some
$i\in I$ such that $n_i=n$.
\end{definition}

(In the above definition, $[m]$ represents the alphabet $\Sigma$.)

\section{Well-shuffing group actions}

In this section we first define the notion of a
well-shuffling class of group actions,
which will be a class $\mathcal{G}$ of group actions $G$ that are hard
to distinguish from the set of small-range functions via a quantum query algorithm.
We will then show that a well-shuffling class of group actions does not allow
super-polynomial quantum speedups. This result (\thm{shuffle})
converts the task of showing group actions do not allow quantum speedups
into the task of showing those group actions are well-shuffling,
a much simpler objective.

We start by defining the set of small-range strings $D_{n,r}$.

\begin{definition}[Small-range strings]
For $n,r\in\bN$, let $D_{n,r}$ be the set of all strings $\alpha$ in $[n]^n$
for which the number of unique alphabet symbols in $\alpha$ is at most $r$.
\end{definition}

We identify a string $\alpha\in[n]^n$ with a function $[n]\to[n]$.
Then $D_{n,r}$ is the set of all functions $[n]\to[n]$ with range size
at most $r$. Next, we define $\cost(G,r)$ as the quantum query complexity
of distinguishing $G$ from $D_{n,r}$ (where $G$ is a group action acting
on $[n]$).

\begin{definition}[Cost]
Identify a permutation on $[n]$ with a string in $[n]^n$ in which
each alphabet symbol occurs exactly once. Then a group action $G$
on $[n]$ corresponds to a subset of $[n]^n$. For $r<n$, let
$\cost_\epsilon(G,r)$ be the minimum number of quantum queries
needed to distinguish $G$ from $D_{n,r}$ to worst-case error $\epsilon$;
that is, $\cost_\epsilon(G,r)\coloneqq \Q_\epsilon(f)$,
where $f$ has domain $G\cup D_{n,r}\subseteq [n]^n$ and is defined by
$f(x)=1$ if $x\in G$ and $f(x)=0$ if $x\in D_{n,r}$. When $r\ge n$,
we set $\cost_\epsilon(G,r)\coloneqq\infty$. When $\epsilon=1/3$, we omit it
and write $\cost(G,r)$.
\end{definition}

We note that since $\cost_\epsilon(G,r)$ is defined as the worst-case quantum
query complexity of a Boolean function, it satisfies amplification,
meaning that the precise value of $\epsilon$ does not matter so long as
it is a constant in $(0,1/2)$ and so long as we do not care about
constant factors.

We define a well-shuffling class of group actions as follows.

\begin{definition}[Well-shuffling group actions]
\label{def:shuffling}
Let $\mathcal{G}$ be a collection of group actions.
We say $\mathcal{G}$ is \emph{well-shuffling}
if $\cost(G,r)=r^{\Omega(1)}$ for $G\in\mathcal{G}$ and $r\in\bN$.
More explicitly, we say $\mathcal{G}$ is well-shuffling
\emph{with power $a\in\bN$} if there exists $b\in\bN$ such that
$\cost(G,r)\ge r^{1/a}/b$ for all $G\in \mathcal{G}$ and all $r\in\bN$.
\end{definition}

We note that $\cost(G,r)\ge r^{1/a}/b$ is always satisfied when $r$
is greater than or equal to the domain size of $G$, since in that case
$\cost(G,r)=\infty$. Hence to show well-shuffling we only need to worry about
$r$ smaller than $n$, the domain size of the group action $G$.

The following theorem will play a central role in this work:
it shows that a well-shuffling
collection of group actions does not allow super-polynomial quantum speedups.

\begin{theorem}\label{thm:shuffle}
Let $f\colon\Dom(f)\to\B$ be a partial Boolean function on $n\in\bN$ bits,
with $\Dom(f)\subseteq\Sigma^n$
(where $\Sigma$ is a finite alphabet).
Let $G$ be a group action on $[n]$, and
suppose that $f$ is symmetric under $G$. Then there is a universal
constant $c\in\bN$ such that
\[\R(f)\le \min\{\,r\in\bN:\cost(G,r)\ge c\Q(f)\,\}.\]
Consequently, if 
$\mathcal{G}$ is a well-shuffling collection
of group actions with power $a$, then for all $f\in F(\mathcal{G})$ we have
$\R(f)=O(\Q(f)^a)$.
\end{theorem}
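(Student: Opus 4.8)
The plan is to convert a near-optimal quantum algorithm for $f$ into a cheap classical one, exploiting the symmetry together with the difficulty of distinguishing $G$ from $D_{n,r}$. Fix any $r$ with $\cost(G,r)\ge c\,\Q(f)$ (the universal constant $c$ to be pinned down at the end), and let $Q$ compute $f$ with $\Q(f)$ queries to worst-case error $1/3$. First I would amplify $Q$ to a quantum algorithm $Q'$ with worst-case error $\delta<1/12$ using $O(\Q(f))$ queries. The symmetry enters in the form: for every $x\in\Dom(f)$ and every $\pi\in G$ we have $x\circ\pi\in\Dom(f)$ and $f(x\circ\pi)=f(x)$, so $Q'$ outputs $f(x)$ with probability $\ge 1-\delta$ on each $x\circ\pi$ — a worst-case guarantee over \emph{all} of $G$.

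The classical algorithm $R$ is the one sketched in the introduction: sample $\alpha$ from a fixed distribution $\mathcal D$ supported on $D_{n,r}$, query the input $x$ at the at most $r$ positions lying in the range of $\alpha$ (this recovers the entire string $x\circ\alpha$, since $(x\circ\alpha)_i=x_{\alpha(i)}$), and then simulate $Q'$ on $x\circ\alpha$ with no further queries, outputting its answer. This uses at most $r$ queries. Its correctness rests on the claim that $\E_{\alpha\sim\mathcal D}\Pr[Q'(x\circ\alpha)=f(x)]\ge 2/3$ for every $x$. I would prove this by contradiction: if it failed for some $x$, then the $O(\Q(f))$-query quantum algorithm $B_x$ that runs $Q'$ on $x\circ\phi$ (answering each query to a position $i$ by one query to $\phi$ followed by the hardcoded bit $x_{\phi(i)}$) and tests whether the output equals $f(x)$ would accept every $\pi\in G$ with probability $\ge 1-\delta$ yet accept $\mathcal D$-inputs with average probability $<2/3$; that is, it would distinguish $G$ from $\mathcal D$ with advantage exceeding $1/4$.

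It then remains to choose $\mathcal D$ so that no $O(\Q(f))$-query algorithm can distinguish $G$ from $\mathcal D$ with advantage exceeding $1/4$, and this is exactly where the hypothesis $\cost(G,r)\ge c\,\Q(f)$ is used. Applying the minimax lemma (\lem{minimax}) to the Boolean distinguishing function with YES-set $G$ and NO-set $D_{n,r}$ — whose quantum query complexity is by definition $\cost(G,r)$, and which does \emph{not} depend on $x$ — yields a hard distribution from which I extract a single, $x$-independent distribution $\mathcal D$ on $D_{n,r}$ against which every algorithm making fewer than $\cost(G,r)$ queries has small distinguishing advantage from $G$. Choosing $c$ large enough that $O(\Q(f))<\cost(G,r)$ and that the induced advantage bound falls below $1/4$ contradicts the previous paragraph, establishing the claim and hence $\R(f)\le r$ for every valid $r$; minimizing over $r$ gives the displayed bound. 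The consequence is then arithmetic: if $\mathcal G$ is well-shuffling with power $a$, then $\cost(G,r)\ge r^{1/a}/b$, so $\cost(G,r)\ge c\,\Q(f)$ holds as soon as $r\ge (bc\,\Q(f))^a$, whence $\R(f)=O(\Q(f)^a)$ for every $f\in F(\mathcal G)$.

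The main obstacle is the step just flagged: turning the one-sided, worst-case guarantee of \lem{minimax} into a clean two-sided advantage bound between $G$ and a single $x$-independent distribution $\mathcal D$ on $D_{n,r}$. The hard distribution produced by minimax may be unbalanced between the two sides, so some care — restricting to a balanced prior, or passing to the equivalent hard-pair formulation of the lower bound, each valid up to the constant factors absorbed into $c$ — is needed to ensure that it is the distinguishing \emph{advantage}, rather than merely the average error against a possibly lopsided mixture, that is controlled. The fact that the symmetry holds for every $\pi\in G$ individually (so that $Q'$ is correct against \emph{any} distribution on $G$, including the worst one) is precisely what lets the minimax adversary fix the hard prior on the $G$-side freely, and is what makes this step go through.
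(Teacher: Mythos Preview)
Your proposal is correct and follows essentially the same route as the paper: amplify the quantum algorithm, use a hard distribution on $D_{n,r}$ obtained via minimax, and simulate classically by querying the range of a sampled $\alpha$. The one place where the paper is more explicit is precisely the obstacle you flag: it packages the extraction of an $x$-independent $\mathcal D$ on $D_{n,r}$ into a separate ``minimax for cost'' lemma (\lem{cost_minimax}), whose proof composes the raw minimax hard distribution with a uniformly random $\pi\in G$ so that the $G$-side collapses to the uniform distribution, yielding directly a pair (uniform on $G$, $\mu$ on $D_{n,r}$) on one of which any low-query algorithm must fail --- this is exactly the ``hard-pair formulation'' you allude to, and your observation that worst-case correctness of $Q'$ on all of $G$ is what makes the $G$-side harmless is the same insight in a slightly different guise.
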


In order to prove this theorem, we will need the following minimax theorem for the
cost measure.

\begin{lemma}[Minimax for cost]\label{lem:cost_minimax}
Let $r,n\in\bN$ satisfy $r<n$, let $\epsilon\in[0,1/2)$,
and let $G$ be a group action on $[n]$. Then there is a distribution
$\mu$ on $D_{n,r}$ that is \emph{hard} in the following sense.
Let $\mu'$ be the uniform distribution on $G\subseteq[n]^n$.
Then any quantum algorithm for distinguishing $G$ from
$D_{n,r}$ which uses fewer than $\cost_\epsilon(G,r)$
queries must either make error $>\epsilon$ on average against $\mu$,
or else make error $>\epsilon$ against $\mu'$
(i.e.\ it fails to distinguish $\mu$ from the uniform distribution on $G$).
\end{lemma}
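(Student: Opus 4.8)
The plan is to derive the desired distribution by combining the general quantum minimax theorem with a symmetrization over the group action $G$. Recall that $\cost_\epsilon(G,r)=\Q_\epsilon(f)$, where $f$ is the distinguishing function on domain $G\cup D_{n,r}$ with $f\equiv 1$ on $G$ and $f\equiv 0$ on $D_{n,r}$; note $G\cap D_{n,r}=\emptyset$ because $r<n$ forces every element of $D_{n,r}$ to use fewer than $n$ symbols, hence to be a non-permutation. Applying \lem{minimax} to $f$ produces a distribution $\nu$ supported on $G\cup D_{n,r}$ such that every quantum algorithm using fewer than $\cost_\epsilon(G,r)$ queries has average error $>\epsilon$ against $\nu$. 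The only feature $\nu$ lacks is that its restriction to $G$ need not be uniform; averaging over $G$ will repair this.

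For each $\pi\in G$, consider the map $x\mapsto x\circ\pi$ on $[n]^n$. Since $G$ is closed under composition, this map sends $G$ bijectively onto $G$; and since composing with a permutation leaves the number of distinct symbols unchanged, it sends $D_{n,r}$ into $D_{n,r}$. Hence it preserves $f$, i.e.\ $f(x\circ\pi)=f(x)$. Crucially, $x\mapsto x\circ\pi$ is merely an input-independent relabeling of query positions: a query to coordinate $i$ of $x\circ\pi$ returns $x_{\pi(i)}$, so any $T$-query algorithm $A$ run on inputs of the form $x\circ\pi$ is simulated, with the same query count, by the algorithm $A_\pi$ that queries $\pi(i)$ whenever $A$ queries $i$. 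Let $\bar\nu$ be the law of $x\circ\pi$ when $x\sim\nu$ and $\pi$ is drawn uniformly from $G$ independently. Then $\bar\nu$ is again hard: if some $A$ with fewer than $\cost_\epsilon(G,r)$ queries had average error $\le\epsilon$ against $\bar\nu$, then $\E_{\pi}\bigl[\mathrm{err}_{A_\pi}(\nu)\bigr]\le\epsilon$ by the invariance of $f$, so some fixed $\pi$ would give a relabeled algorithm $A_\pi$ with the same query count and average error $\le\epsilon$ on $\nu$, contradicting the hardness of $\nu$.

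It remains to identify $\bar\nu$. Because right-composition by a uniform $\pi\in G$ acts on $G$ as the regular action (for fixed $g\in G$, the element $g\circ\pi$ is uniform over $G$ as $\pi$ ranges uniformly over $G$), the mass of $\bar\nu$ lying on $G$ is exactly $p\cdot\mu'$, where $\mu'$ is the uniform distribution on $G$ and $p\coloneqq\nu(G)$. Writing $\mu$ for the conditional distribution of $\bar\nu$ on $D_{n,r}$ (well-defined since a hard distribution must place mass on both sides, as otherwise a constant output refutes hardness), we obtain $\bar\nu=p\,\mu'+(1-p)\,\mu$. For any algorithm $A$, its average error against $\bar\nu$ is therefore the convex combination $p\cdot e_1+(1-p)\cdot e_0$, where $e_1$ is its average error against $\mu'$ and $e_0$ its average error against $\mu$. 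Hardness of $\bar\nu$ gives $p\,e_1+(1-p)\,e_0>\epsilon$ for every $A$ with fewer than $\cost_\epsilon(G,r)$ queries, and a convex combination exceeds $\epsilon$ only if at least one of $e_0,e_1$ does. This is exactly the claimed dichotomy, with $\mu$ the sought hard distribution on $D_{n,r}$.

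I expect the main subtlety to be the symmetrization step, namely verifying that averaging the hard distribution over $G$ cannot aid the algorithm. This rests on the symmetry $x\mapsto x\circ\pi$ being a fixed permutation of the query register (freely implementable at no query cost) together with the $G$-invariance of $f$; both hold here, and they are precisely what license passing from a hard $\nu$ to a hard $\bar\nu$ with uniform $G$-marginal. A minor boundary issue is that \lem{minimax} is stated for $\epsilon\in(0,1/2)$ while the claim permits $\epsilon=0$; the $\epsilon=0$ case follows by the same argument, since the underlying linear-programming duality also holds at $\epsilon=0$, or alternatively by a limiting argument, and I would dispatch it in a line.
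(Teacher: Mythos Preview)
Your proposal is correct and follows essentially the same approach as the paper's proof: apply the quantum minimax lemma to obtain a hard distribution $\nu$, symmetrize it over $G$ via $x\mapsto x\circ\pi$ for uniform $\pi\in G$ (using that this is a free relabeling of queries and preserves $f$), observe that the $G$-marginal of the symmetrized distribution is uniform, and conclude by a convex-combination argument. Your write-up is in fact a bit more careful than the paper's, spelling out why $G\cap D_{n,r}=\emptyset$, why $D_{n,r}$ is closed under the symmetrization, why the hard distribution must charge both sides, and the $\epsilon=0$ edge case.
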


\begin{proof}
Let $f$ be the function which asks to distinguish $G$ from $D_{n,r}$
in the worst case. Then by the minimax theorem (\lem{minimax}), there is a hard
distribution $\nu$ for $f$, such that any quantum algorithm
using fewer than $\Q_\epsilon(f)=\cost_\epsilon(G,r)$ queries
must make more than $\epsilon$ error against $\nu$.
Let $\nu'$ be the distribution we get by applying a uniformly
random permutation from $G$ to a sample from $\nu$.
Then $\nu'$ is still a hard distribution for $f$. Indeed,
if it were not a hard distribution, there would be some quantum
algorithm $Q$ solving $f$ against $\nu'$ using too few queries;
but in that case, we could design an algorithm $Q'$ for solving
$f$ against $\nu$ simply by taking the input $x$, implicitly
applying a uniformly random $\pi$ from $G$ to permute the bits of $x$
(this can be done without querying $x$, simply by redirecting
all future queries $i\in[n]$ through the permutation $\pi$),
and then running $Q$ on the permuted string.

Now, note that composing
a uniformly random permutation from $G$ with an arbitrary (fixed)
permutation from $G$ gives a uniformly random permutation from $G$.
This means that $\nu'$ is some mixture of the uniform distribution $\mu'$
on $G\subseteq[n]^n$ and another distribution $\mu$ on $D_{n,r}$.
Then any algorithm which succeeds on both $\mu$ and $\mu'$ to
error $\epsilon$ will also succeed on $\nu'$ to error $\epsilon$,
from which the desired result follows.
\end{proof}

Using this lemma, we now prove \thm{shuffle}.

\begin{proof}(Of \thm{shuffle}.)
Let $Q$ be a quantum algorithm for $f$ which uses $\Q(f)$ queries.
Amplify it to $Q'$ by repeating $3$ times and taking the majority
vote; then it uses $3\Q(f)$ queries and makes worst-case error
$7/27$ instead of $1/3$.
Using \lem{cost_minimax}, let $\mu$ be the hard distribution on $D_{n,r}$
which is hard to distinguish from $G$ to error $\epsilon$,
where we pick $r$ later and pick $\epsilon$
to be a constant close to $1/2$. Sample $\alpha$ from $\mu$,
and consider the string $x\circ \alpha$
with $(x\circ\alpha)_i=x_{\alpha(i)}$.

Now, $Q'$ succeeds on $f$ to error $7/27$, and $f$ is invariant
under $G$, so $Q'$ outputs $f(x)$ to
error $7/27$ when run on $x\circ\pi$ for each $\pi\in G$.
In particular, consider picking $\pi$ from $G$ uniformly at random,
and running $Q'$ on $x\circ\pi$ where the string $x\in\Dom(f)$ is
fixed. Compare this to the behavior of $Q'$ on $x\circ\alpha$,
where $\alpha$ is sampled from the hard distribution $\mu$ on $D_{n,r}$.

If $Q'$ did not output $f(x)$ on $x\circ \alpha$
to error at most $1/3$, then we could convert $Q'$ to an
algorithm distinguishing $\pi$ from $\alpha$ with constant error.
This is because $Q'$ outputs $f(x)$ to error at most $7/27<1/3$
on input $x\circ\pi$; hence $Q'$ behaves differently when run on
$x\circ\alpha$ and on $x\circ \pi$. We can convert $Q'$ to
an algorithm $Q''$ which hard codes the input $x$, and receives
either a random $\pi$ from $G$ or a random $\alpha$ from $\mu$
as input. This algorithm $Q''$ will only make $3\Q(f)$ queries
to $\pi$ or $\alpha$, but its acceptance probability
differs by a constant gap between the two distributions, which
(using some standard re-balancing) we can use to distinguish
$G$ from $\mu$ to a constant error.

Now, assuming the distribution
$\mu$ was picked to be hard enough (i.e.\ $\epsilon$ was chosen
sufficiently close to $1/2$), this means that $3\Q(f)$,
the query cost of $Q''$, is at least $\cost_\epsilon(G,r)$.
Since $\cost_\epsilon(G,r)$ is the worst-case quantum query complexity
of a Boolean function, it can be amplified. We conclude that
if $Q'$ failed to output $f(x)$ on input $x\circ\alpha$ (with
$\alpha\leftarrow\mu$) to error at most $1/3$, then we have
$\Q(f)=\Omega(\cost(G,r))$, that is, $\Q(f)>\cost(G,r)/c$
for some universal constant $c$ (from amplification).

Now assume that $\Q(f)\le\cost(G,r)/c$. Then $Q'$ has error at most $1/3$
for computing $f(x)$ when run on $\alpha(x)$, with $\alpha$
chosen from $\mu$. Since $\alpha\in D_{n,r}$
uses at most $r$ alphabet symbols,
a randomized algorithm can simulate $Q'$ simply by picking
$\alpha$ from $\mu$ and querying all the $r$ bits of $x$ used
in the string $x\circ\alpha$, fully determining that string.
This algorithm $R$ uses $r$ queries, and makes at most $1/3$
error, so we conclude that $\R(f)\le r$.

By correctly picking $r$, we conclude that
$\R(f)\le \min\{\,r\in\bN:\cost(G,r)\ge c\Q(f)\,\}$,
as desired. Finally, note that if $\cost(G,r)\ge r^{1/a}/b$,
then by picking $r=(bc\Q(f))^a$ we get
$\cost(G,r)\ge c\Q(f)$. From this it follows that
$\R(f)\le (bc\Q(f))^a=O(\Q(f)^a)$, as desired.
\end{proof}

The upshot of \thm{shuffle} is that we can show a class of group actions
$\mathcal{G}$ does not allow super-polynomial quantum speedups
simply by showing that it is well-shuffling -- that is, by showing
that $G\in\mathcal{G}$ is hard to distinguish from
the set of small-range functions $D_{n,r}$
using a quantum query algorithm.

\section{Showing group actions are well-shuffling}

In this section, we introduce some tools for showing that a collection
of group actions is well-shuffling. Due to \thm{shuffle}, a well-shuffling
collection of group actions does not allow any super-polynomial
quantum speedups for the class of functions symmetric under it, so these tools
can be directly used to show that certain symmetries are not consistent
with large quantum speedups.

\subsection{The symmetric group action}

The first fundamental result is that the class of full symmetric
group actions $S_n$ is well-shuffling.
This was shown by Zhandry \cite{Zha13} in a different context, though
we also provide a simpler proof by a reduction from the collision problem.

\begin{restatable}
{theorem}{zhandry}
\label{thm:zhandry}
There is a universal constant $C$ such that any quantum algorithm
distinguishing a permutation in $S_n$ from a string in $D_{n,r}$
must make at least $r^{1/3}/C$ queries.
\end{restatable}

This theorem says that $S_n$ is hard to distinguish from
$D_{n,r}$ (moreover, Zhandry \cite{Zha13} showed
that the hard distribution over $D_{n,r}$
is uniform, but we do not need this fact).

\begin{proof}
When $n$ is a multiple of $r$, then each $(n/r)$-to-$1$
function has range $r$ and each $1$-to-$1$ function is a permutation;
hence distinguishing $(n/r)$-to-$1$ from $1$-to-$1$ functions
is a sub-problem of distinguishing $D_{n,r}$ from $S_n$.
This sub-problem is the collision problem,
from which an $\Omega(r^{1/3})$
lower bound directly follows \cite{AS04,Amb05,Kut05}.
When $n$ is not a multiple of $r$ but $r\le n/2$, we can just set
$n'=r\lceil n/r\rceil$, and then distinguishing $(n'/r)$-to-$1$
from $1$-to-$1$ functions with domain size $n'$ still
reduces to distinguishing $D_{n,r}$ from $S_n$.
\end{proof}

From \thm{zhandry}, the following two corollaries immediately follow
(in light of \thm{shuffle}).

\begin{corollary}
The set of symmetric group actions $\mathcal{S}=\{S_n\}_{n\in\bN}$
is well-shuffling with power $3$.
\end{corollary}

\begin{corollary}
\label{cor:zhandry}
All (possibly partial) Boolean functions $f$ that are symmetric
under the full symmetric group action $S_n$ satisfy
$\R(f)=O(\Q(f)^3)$.
\end{corollary}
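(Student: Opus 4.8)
The plan is to obtain this corollary as an immediate two-step consequence of the machinery already assembled, with essentially no new work required. The statement sits at the confluence of \thm{zhandry} (a quantum lower bound) and \thm{shuffle} (a quantum-to-classical simulation), and the proof is simply their composition.

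First I would confirm that the family $\mathcal{S}=\{S_n\}_{n\in\bN}$ is well-shuffling with power $3$ in the sense of \defn{shuffling}. By \thm{zhandry}, there is a universal constant $C$ such that any quantum algorithm distinguishing a permutation in $S_n$ from a string in $D_{n,r}$ must make at least $r^{1/3}/C$ queries; that is, $\cost(S_n,r)\ge r^{1/3}/C$ for every $n$ and every $r<n$. For $r\ge n$ we have $\cost(S_n,r)=\infty$ by definition, so the same bound holds trivially in that regime. Reading this against \defn{shuffling} with $a=3$ and $b=C$, we see that $\cost(S_n,r)\ge r^{1/a}/b$ holds uniformly over the whole family, which is exactly the assertion that $\mathcal{S}$ is well-shuffling with power $3$.

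Second I would invoke \thm{shuffle}. Since the function $f$ in question is symmetric under the full symmetric group action $S_n$, it belongs to $F(\mathcal{S})$. The consequence clause of \thm{shuffle} states that for a well-shuffling collection of group actions with power $a$, every symmetric function $g$ in its associated class satisfies $\R(g)=O(\Q(g)^a)$; applying this with $a=3$ to $f$ yields $\R(f)=O(\Q(f)^3)$, which is precisely the claim. There is no genuine obstacle at this stage: the entire analytic content lives upstream, in the collision-based lower bound of \thm{zhandry} and in the small-range simulation argument proving \thm{shuffle}. The corollary is a clean deduction that merely records their combination for the special case of full symmetry.
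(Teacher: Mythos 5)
Your proposal is correct and matches the paper's own (implicit) proof exactly: the paper derives this corollary by noting that \thm{zhandry} gives $\cost(S_n,r)\ge r^{1/3}/C$, hence $\mathcal{S}=\{S_n\}_{n\in\bN}$ is well-shuffling with power $3$, and then applying the consequence clause of \thm{shuffle} with $a=3$. The only difference is that you spell out the two steps (including the trivial $r\ge n$ regime) that the paper compresses into ``immediately follow,'' which is fine.
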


Apart from \thm{zhandry}, the main tools we use to prove
the well-shuffling property are transformations on group actions which
approximately preserve $\cost(G,r)$.
We outline several such transformations and invariances.
Since we prove \thm{zhandry} by a reduction from collision,
and since our main tools from here on out are additional reductions,
it's effectively the case that all lower bounds in this paper work by
reductions from collision.

\subsection{The case of highly-transitive group actions}

We next show that a collection of highly-transitive group actions
is always well-shuffling; we define the notion of highly-transitive
collections below.

\begin{definition}[Highly transitive]\label{def:highly_transitive}
We say a collection $\mathcal{G}$ of group
actions is \emph{highly transitive} if each group action $G\in\mathcal{G}$
is $n^{\Omega(1)}$-transitive, where $n$ is the domain size of $G$.
In other words, $\mathcal{G}$ is highly transitive
if there exist some constants $a,b\in\bN$ such that each $G\in\mathcal{G}$
is $(n^{1/a}/b)$-transitive, where $n$ is the domain size of $G$.
\end{definition}

To show that highly transitive collections of group actions are well-shuffling,
we show that group actions with high transitivity look nearly indistinguishable
from $S_n$ to any quantum algorithm, and that they therefore
share the well-shuffling property of the group actions $S_n$. More formally,
we have the following theorem.

\begin{theorem}[Similar-looking group actions have similar costs]
\label{thm:transitivity_transformation}
Suppose $G$ and $H$ are group actions on $[n]$
and $k\le n$ is a positive integer such that for each
$i_1,i_2,\dots,i_k,j_1,j_2,\dots,j_k\in[n]$, it holds that
\[\left|\Pr_{\pi\leftarrow G}[\forall \ell\;\pi(i_\ell)=j_\ell]
-\Pr_{\pi\leftarrow H}[\forall \ell\;\pi(i_\ell)=j_\ell]\right|
\le n^{-10k}.\]
Then $\cost(H,r)\ge\Omega(\min\{k,\cost(G,r)\})$.
In particular, if $k\ge n^{\Omega(1)}$ and if
$\cost(G,r)\ge r^{\Omega(1)}$, then we have $\cost(H,r)\ge r^{\Omega(1)}$.
\end{theorem}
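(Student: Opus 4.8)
The plan is to argue by reduction: I will show that any quantum algorithm distinguishing $H$ from $D_{n,r}$ either makes $\Omega(k)$ queries outright, or else can be reused almost verbatim as a distinguisher for $G$ versus $D_{n,r}$, and hence must make $\Omega(\cost(G,r))$ queries. Concretely, let $A$ be a distinguisher for $H$ versus $D_{n,r}$ achieving $T=\cost(H,r)$ queries, amplified so that it accepts every $\pi\in H$ with probability $\ge 9/10$ and rejects every $\alpha\in D_{n,r}$ with probability $\ge 9/10$, at the cost of $T'=O(T)$ queries. If $2T'>k$ we are immediately done, since then $\cost(H,r)=\Omega(k)$. The entire content of the theorem is therefore the complementary case $2T'\le k$, where I must show that $A$ is \emph{also} a good distinguisher for the uniform distribution on $G$ against $D_{n,r}$.

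The heart of the argument is a moment-matching lemma: \emph{if $q$ is any quantum algorithm making at most $T'$ queries with $2T'\le k$, then its acceptance probabilities averaged over uniform $\pi\leftarrow G$ and over uniform $\pi\leftarrow H$ differ by at most $2n^{-7}$.} I would prove this by the polynomial method in the Fourier basis over $\mathbb{Z}_n$. Treating the permutation oracle as a phase oracle $\lvert i,b\rangle\mapsto\omega^{b\pi(i)}\lvert i,b\rangle$ with $\omega$ a primitive $n$th root of unity, each final amplitude $A_w(\pi)$ expands as $\sum_\gamma\hat a_{w,\gamma}\,\omega^{\sum_i\gamma_i\pi(i)}$ over characters $\gamma$ of support at most $T'$; Parseval against the uniform distribution on all functions $[n]\to[n]$ gives the crucial normalization $\sum_w\sum_\gamma\lvert\hat a_{w,\gamma}\rvert^2=1$. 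Writing the acceptance probability as $p=\sum_\delta\hat p_\delta\,\chi_\delta$ with $\hat p_\delta=\sum_w\sum_{\gamma-\gamma'=\delta}\hat a_{w,\gamma}\overline{\hat a_{w,\gamma'}}$ (so every $\delta$ satisfies $\lvert\supp\delta\rvert\le 2T'\le k$), I would bound two ingredients separately. First, a row-sum/operator-norm estimate on the $0/1$ matrix $M_{\gamma,\gamma'}=[\lvert\supp(\gamma-\gamma')\rvert=m]$, whose maximum row sum is at most $n^{2m}$, combined with the Parseval normalization yields $\sum_{\lvert\supp\delta\rvert=m}\lvert\hat p_\delta\rvert\le n^{2m}$. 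Second, the hypothesis controls the character expectations: from the $k$-tuple agreement one deduces that every $m$-tuple marginal ($m\le k$) agrees between $G$ and $H$ to within $n^{-10m}$ (marginalizing out $k-m$ coordinates costs a factor $n^{k-m}$, and $n^{k-m}\cdot n^{-10k}\le n^{-10m}$), whence $\lvert\E_{G}[\chi_\delta]-\E_{H}[\chi_\delta]\rvert\le n^m\cdot n^{-10m}=n^{-9m}$ for $\lvert\supp\delta\rvert=m\ge 1$. Summing over levels gives
\[\bigl\lvert\E_{G}[p]-\E_{H}[p]\bigr\rvert\le\sum_{m=1}^{2T'}n^{2m}\cdot n^{-9m}=\sum_{m\ge 1}n^{-7m}\le 2n^{-7},\]
which is the lemma.

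With the lemma in hand the assembly is short. Applying it to $A$ (valid since $2T'\le k$) shows $A$ accepts uniform $\pi\leftarrow G$ with probability at least $9/10-2n^{-7}\ge 4/5$, while by construction $A$ rejects \emph{every} string of $D_{n,r}$ with probability $\ge 9/10$. By the minimax lemma for cost (\lem{cost_minimax}) there is a hard distribution $\mu$ on $D_{n,r}$ against which any algorithm distinguishing $\mu$ from the uniform distribution on $G$ must use at least $\cost_\epsilon(G,r)$ queries; since $A$ accepts uniform $G$ with probability $\ge 4/5$ and rejects $\mu$ (indeed any distribution on $D_{n,r}$) with probability $\ge 9/10$, it distinguishes them to constant error, so $T'\ge\Omega(\cost(G,r))$ after amplification. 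Combining the two cases gives $\cost(H,r)=\Omega(T')\ge\Omega(\min\{k,\cost(G,r)\})$, and the ``in particular'' claim is then immediate.

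The main obstacle is the moment-matching lemma, and within it the interplay between the two estimates: the number of Fourier coefficients contributing at level $m$ grows like $n^{2m}$, so a crude per-coefficient marginal bound would be useless. The point — and the reason the hypothesis is stated with the very generous exponent $n^{-10k}$ — is that the marginal agreement decays as $n^{-9m}$ per level, strictly faster than the $n^{2m}$ growth in the coefficient count, making the geometric series converge to something negligible. Getting the Parseval normalization and the operator-norm bound to line up so that this cancellation is clean is the delicate part; the surrounding reduction steps are routine given \lem{cost_minimax} and standard amplification.
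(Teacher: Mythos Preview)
Your argument is correct and shares the paper's overall architecture: case-split on whether the query cost is $\Omega(k)$, and in the small-query regime use the polynomial method to show that the acceptance probabilities on uniform $H$ and uniform $G$ differ by something negligible, so the same algorithm also distinguishes $G$ from $D_{n,r}$. The difference is entirely in how the polynomial step is executed. The paper works in the indicator-variable representation $z_{ij}=[\pi(i)=j]$, so that the expectation of each monomial is \emph{literally} one of the probabilities $\Pr_\pi[\forall\ell\;\pi(i_\ell)=j_\ell]$ appearing in the hypothesis (a degree-$d'$ monomial with $d'\le k$ is padded to a $k$-tuple by repeating coordinates, which the hypothesis permits since it does not require the $i_\ell$ to be distinct); the remaining work is an $\ell_1$ bound on the coefficients, obtained by a somewhat ad hoc argument exploiting $[0,1]$-boundedness of the acceptance polynomial on auxiliary inputs, yielding total error $\le n^{6k}\cdot n^{-10k}=n^{-4k}$. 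Your route instead passes through the Fourier/phase-oracle basis over $\mathbb{Z}_n^n$: Parseval on the \emph{amplitudes} gives an $\ell_2$ normalization for free, an operator-norm estimate converts this into a level-$m$ bound $\sum_{|\supp\delta|=m}|\hat p_\delta|\le n^{2m}$ on the acceptance coefficients, and a separate marginalization step is needed to extract $m$-tuple bounds from the $k$-tuple hypothesis. The paper's version is more direct in that no character sums or marginalization are needed---the hypothesis applies verbatim to each monomial; yours is cleaner in that the coefficient bound drops out of Parseval rather than the paper's hands-on estimate, and the level-by-level bookkeeping makes transparent why the exponent $10$ in $n^{-10k}$ is comfortably enough.
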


\begin{proof}
Let $Q$ be a quantum algorithm for distinguishing
$H$ from $D_{n,r}$ which uses $\cost(H,r)$ and achieves worst-case
error $1/3$. If $\cost(H,r)\ge k$, we are done, so assume
$\cost(H,r)<k$. Now, $Q$ can be converted into a polynomial of
degree at most $2\cost(H,r)$ in the variables $z_{ij}$, where
$z_{ij}=1$ if the input $x$ satisfies $x_i=j$ and otherwise $z_{ij}=0$
(see \cite{AS04}).
This polynomial $p$ satisfies $p(x)\in [0,1/3]$ if $x\in D_{n,r}$ and
$p(x)\in[2/3,1]$ if $x\in H$. It has $n^2$ variables and degree
$d=2\cost(H,r)$. We assume it has no monomials that always
evaluate to $0$ (for example, $z_{11}z_{12}$, which is always
$0$ as $x_1$ cannot be both $1$ and $2$), because if it had such
monomials we could just delete them.

We claim that the sum of absolute values of coefficients of $p$ is at most
$n^{3d}$, where $d=2\cost(H,r)$ is its degree. To see this,
first note that there are at most $\binom{n^2}{d}$ monomials of $p$ of degree
$d$; for each such monomial $m$, let $p_m$ be the polynomial consisting
of all terms in $p$ that use a subset of the variables in $m$.
Then the sum of the absolute values of the coefficients of $p$ is at most
$\binom{n^2}{d}$ times the maximum sum of absolute values of the coefficients
in one of the polynomials $p_m$; since $\binom{n^2}{d}\le n^{2d}$, it suffices
to upper bound the sum of absolute values of coefficients of $p_m$
for arbitrary $m$. Now, $m$ consists of $d$ variables $z_{i_tj_t}$ for $t=1,2,\dots d$,
which equal $1$ when $x_{i_t}=j_t$
and equal $0$ otherwise. Consider feeding into the quantum
algorithm an input string where $x_i=*$ when $i\notin\{i_1,i_2,\dots,i_d\}$,
and $x_{i_t}$ is either $j_t$ or $*$ for $t=1,2,\dots d$. The quantum algorithm
will accept the string with some probability between $0$ and $1$, which
means the polynomial $p$ computing the acceptance probability of $Q$ will evaluate to
something between $0$ and $1$. But such inputs ``zero out'' all terms
that use variables outside of $m$, and hence turn $p$ into $p_m$.
From this we can conclude that $p_m$ is bounded in $[0,1]$ for all inputs
it receives in $\B^d$. But polynomials bounded in $[0,1]$ on the Boolean hypercube
can have sum of coefficients at most $5^d$ (one way to analyze would be
to recall that a bounded polynomial in the $\{-1,1\}$ basis has its sum
of squares of coefficients equal to at most $1$, and has at most $2^d$
coefficients, so by Cauchy-Schwartz, the sum of absolute values of coefficients
is at most $2^{d/2}$; converting the $\{-1,1\}$ basis to the $\B$ basis
requires plugging in $(2z-1)$ terms into the variables, which can increase
the sum of absolute values by a factor of at most $3^d$, for a total of at most
$(3\sqrt{2})^d\le 5^d$). Assuming $n\ge 5$, we get an upper bound of $n^{3d}$
on the sum of absolute values of coefficients of $p$.

We have $d\le 2k$, so this sum is also at most $n^{6k}$.
Now, on each input $x$, the expected output of $p(\pi(x))$
when $\pi$ is sampled uniformly from $H$ is a linear combination
of the expectations of the monomials of $p$. For each monomial,
this expectation is just the probability that the monomial
is satisfied, which by the condition on $G$ and $H$
is within $n^{-10k}$ of the expectation under $\pi\leftarrow G$.
It follows that the expectation of $p(\pi(x))$ when $\pi\leftarrow H$
is within $n^{6k}n^{-10k}=n^{-4k}$ of the expectation of $p(\pi(x))$
when $\pi\leftarrow G$. But this expectation is simply the acceptance
probability of $Q$. Hence the acceptance probability of $Q$
on the uniform distribution on $H$ is within $n^{-4k}$
of the acceptance probability of $Q$ on the uniform distribution on $G$.

Since $Q$ distinguishes $H$ from $D_{n,r}$, it distinguishes
the uniform distribution on $H$ from any string in $D_{n,r}$.
Since it does not distinguish the uniform distribution on $H$
from the uniform distribution on $G$, $Q$ must also
distinguish the uniform distribution on $G$ from any input in
$D_{n,r}$ to error $1/3+n^{-4k}$. By amplifying, we can
get this down to error $1/3$, meaning that $\cost(G,r)=O(\cost(H,r))$,
as desired.
\end{proof}

To show that highly-transitive group actions are well-shuffling,
we now only need to show that a $k$-transitive group action $G$
looks like $S_n$ when examining any $k$ bits. This directly follows from
the definition of transitivity.

\begin{corollary}
If $G$ is $k$-transitive, then $\cost(G,r)=\Omega(\min\{k,r^{1/3}\})$,
where the constant in the big-$\Omega$ is universal.
\end{corollary}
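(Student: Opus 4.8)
The plan is to invoke \thm{transitivity_transformation} with the given $k$-transitive group action playing the role of $H$ (the action whose cost we want to bound) and the full symmetric group $S_n$ playing the role of the reference action $G$ (whose cost is already understood via \thm{zhandry}). First I would note that we may assume $k\le n$, since a group action on $[n]$ can be at most $n$-transitive. The hypothesis of \thm{transitivity_transformation} then requires that for every pair of $k$-tuples $i_1,\dots,i_k$ and $j_1,\dots,j_k$, the probability that a uniformly random $\pi$ in the $k$-transitive action sends each $i_\ell$ to $j_\ell$ differs from the corresponding probability under a uniformly random $\pi\in S_n$ by at most $n^{-10k}$. I would prove the stronger statement that these two probabilities are in fact \emph{exactly equal}, so the difference is $0\le n^{-10k}$ and the hypothesis holds trivially.

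The heart of the argument is an orbit--stabilizer computation. Consider first the case where $i_1,\dots,i_k$ are distinct and $j_1,\dots,j_k$ are distinct. The map $\phi\colon G\to[n]^k$ sending $\pi\mapsto(\pi(i_1),\dots,\pi(i_k))$ has image equal to the set of all ordered $k$-tuples of distinct elements (this is exactly $k$-transitivity), and its fibers are the left cosets of the pointwise stabilizer of $(i_1,\dots,i_k)$, hence all of equal size. Since there are $n!/(n-k)!$ distinct target tuples, the fiber over $(j_1,\dots,j_k)$ carries measure $(n-k)!/n!$ under the uniform distribution on $G$, which is precisely $\Pr_{\pi\leftarrow S_n}[\forall\ell\;\pi(i_\ell)=j_\ell]$; so the two probabilities match. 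For degenerate tuples I would argue that if the constraint is inconsistent with being a bijection (some $i_\ell=i_{\ell'}$ with $j_\ell\ne j_{\ell'}$, or some $j_\ell=j_{\ell'}$ with $i_\ell\ne i_{\ell'}$), then both probabilities are $0$; otherwise the constraint collapses to one on $k'\le k$ distinct source/target pairs, and $k$-transitivity implies $k'$-transitivity, so the previous computation again gives equality.

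With the hypothesis verified, \thm{transitivity_transformation} yields $\cost(G,r)\ge\Omega(\min\{k,\cost(S_n,r)\})$, where $G$ now denotes the $k$-transitive action of the corollary. Finally I would substitute the bound $\cost(S_n,r)\ge r^{1/3}/C$ from \thm{zhandry}, giving $\cost(G,r)\ge\Omega(\min\{k,r^{1/3}\})$ with a universal constant, as claimed. I do not anticipate a serious obstacle: the only point requiring care is the observation that $k$-transitivity forces the \emph{uniform}-on-orbits behaviour via orbit--stabilizer (transitivity a priori only guarantees existence of a suitable $\pi$, not that all targets are equally likely), but this is exactly what the equal-size coset structure of the fibers of $\phi$ delivers.
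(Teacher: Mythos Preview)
Your proposal is correct and follows essentially the same route as the paper: apply \thm{transitivity_transformation} with the $k$-transitive action as $H$ and $S_n$ as the reference, observe that the two probabilities agree exactly (so the $n^{-10k}$ hypothesis is trivially met), and then invoke \thm{zhandry}. The paper merely asserts that ``$k$-transitivity completely determines'' the probability in question, whereas you spell out the orbit--stabilizer justification and the handling of degenerate tuples; this extra detail is sound and fills in what the paper leaves implicit.
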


\begin{proof}
This follows directly from \thm{transitivity_transformation},
setting $H$ to be the $k$-transitive group action we care about and setting
$G=S_n$. To see this, observe that $k$-transitivity completely determines
$\Pr_{\pi\leftarrow G}[\forall \ell\in[k]\;\pi(i_\ell)=j_\ell]$,
and that both $G$ and $H$ are $k$-transitive; hence this expression
is the same for both $G$ and $H$, and the difference between the two expressions
is exactly $0$ (certainly less than $n^{-10k}$).
\end{proof}

From this, the formal version of \thm{main_transitive} follows.

\begin{corollary}\label{cor:formal_transitive}
If $\mathcal{G}$ is a highly transitive collection of group actions,
then it is well-shuffling, and hence $\R(f)=O(\poly(\Q(f)))$ for
$f\in F(\mathcal{G})$.
\end{corollary}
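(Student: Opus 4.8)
The plan is to verify the two hypotheses of \thm{shuffle} in turn: first establish that a highly transitive collection $\mathcal{G}$ is well-shuffling with an explicit power, and then read off the polynomial relationship between $\R(f)$ and $\Q(f)$ directly from that theorem. The well-shuffling step is where essentially all the work lies, and it is a short deduction from the preceding corollary (the bound $\cost(G,r)=\Omega(\min\{k,r^{1/3}\})$ for $k$-transitive $G$) together with the definitions.

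To carry this out, I would first unpack \defn{highly_transitive}: there are constants $a,b\in\bN$ such that every $G\in\mathcal{G}$ on domain $[n]$ is $(n^{1/a}/b)$-transitive. Since a $k$-transitive action is also $k'$-transitive for all $k'\le k$, enlarging $a$ only weakens the hypothesis, so I may assume $a\ge 3$. Setting $k=n^{1/a}/b$, the preceding corollary supplies a universal constant $c_0>0$ with $\cost(G,r)\ge c_0\min\{n^{1/a}/b,\,r^{1/3}\}$ for every $r$, and it remains to massage this into the form $\cost(G,r)\ge r^{1/a}/b'$ required by \defn{shuffling}.

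Here I would invoke the remark following \defn{shuffling}: the bound is automatic once $r$ reaches the domain size $n$, since then $\cost(G,r)=\infty$, so I only need to treat $r<n$. For such $r$, the inequality $r<n$ gives $r^{1/a}/b<n^{1/a}/b$, while $a\ge 3$ gives $r^{1/a}\le r^{1/3}$; combining these shows $\min\{n^{1/a}/b,\,r^{1/3}\}\ge r^{1/a}/b$, and hence $\cost(G,r)\ge (c_0/b)\,r^{1/a}$. Thus $\mathcal{G}$ is well-shuffling with power $a$ (taking $b'=\lceil b/c_0\rceil$), and \thm{shuffle} then yields $\R(f)=O(\Q(f)^a)=O(\poly(\Q(f)))$ for every $f\in F(\mathcal{G})$, as claimed. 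The only subtle point---and the closest thing to an obstacle---is the regime where $k=n^{1/a}/b$ is the smaller term in the min, so that the raw lower bound is phrased in terms of $n$ rather than $r$; the resolution is precisely that we never need the bound for $r\ge n$, which lets us trade the $n^{1/a}$ factor for the smaller $r^{1/a}$ at no cost.
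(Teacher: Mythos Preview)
Your proof is correct and follows the same route as the paper, which simply states the corollary as an immediate consequence of the preceding bound $\cost(G,r)=\Omega(\min\{k,r^{1/3}\})$ together with \thm{shuffle}, without spelling out the details. You have filled in those details carefully (in particular, the observation that $a\ge 3$ may be assumed without loss of generality and the case split on $r<n$ versus $r\ge n$), but there is no difference in approach.
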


\subsection{Transformations for graph symmetries}

Next, we introduce some additional transformations on group actions
which approximately preserve the cost; the transformations in this section
will allow us to show that graph property group actions (and several variants of them)
are well-shuffling.

\subsubsection{Transformation for directed graphs}

We start by defining an extension of a group action $G$ on $[n]$ to an
action on $[n]^\ell$. The notation in the definition below
comes from \cite{Ker13}.

\begin{definition}
Let $G$ be a group action on domain $D$, and let $\ell\in\bN$.
Define $G^{(\ell)}$ to be the group action which acts on domain $D^\ell$
by $\pi(i_1,i_2,\dots,i_\ell)=(\pi(i_1),\pi(i_2),\dots,\pi(i_\ell))$
for each $\pi\in G$ (so the number of permutations in $G^{(\ell)}$
is the same as the number of permutations in $G$).

Define $G^{<\ell>}$ to be the group action $G^{(\ell)}$
with domain restricted
to the subset $D^{<\ell>}\subseteq D^\ell$
consisting of all distinct $\ell$-tuples of elements of $D$.
\end{definition}

We show that these transformations both preserve the cost, at least when $\ell$
is constant. We start with $G^{(\ell)}$.

\begin{theorem}
\label{thm:exponentiation}
Let $G$ be a group action on $[n]$, and let $H$ be
the group action $G^{(\ell)}$. Then we have $\cost(H,r^\ell)\ge\cost(G,r)/\ell$.
\end{theorem}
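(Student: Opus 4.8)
The plan is to prove a reduction: any quantum algorithm distinguishing $H=G^{(\ell)}$ from $D_{n^\ell,\,r^\ell}$ can be converted into a quantum algorithm distinguishing $G$ from $D_{n,r}$ at the cost of a factor $\ell$ in the number of queries. This yields $\cost(G,r)\le \ell\cdot\cost(H,r^\ell)$, which rearranges to the claimed bound $\cost(H,r^\ell)\ge\cost(G,r)/\ell$. Throughout I write $N=n^\ell$ for the domain size of $H$ and identify $[n]^\ell$ with $[N]$.

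The heart of the argument is a lifting map from strings on $[n]$ to strings on $[N]$. Given $x\in[n]^n$, viewed as a function $x\colon[n]\to[n]$, define $X\colon[n]^\ell\to[n]^\ell$ by $X(i_1,\dots,i_\ell)=(x_{i_1},\dots,x_{i_\ell})$. Two observations drive the reduction. First, if $x=\pi$ is a permutation in $G$, then $X$ is exactly the permutation $\pi^{(\ell)}(i_1,\dots,i_\ell)=(\pi(i_1),\dots,\pi(i_\ell))$, so $X\in H$. Second, and this is the key combinatorial point, if $x\in D_{n,r}$ uses at most $r$ distinct symbols, then every output tuple of $X$ has each of its $\ell$ coordinates drawn from those $\le r$ symbols, so $X$ takes at most $r^\ell$ distinct values and hence $X\in D_{N,\,r^\ell}$. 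Thus the lifting map sends $G\cup D_{n,r}$ into $H\cup D_{N,\,r^\ell}$, respecting both parts of the promise.

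Next I would implement the reduction at the level of quantum oracles. A query to the oracle for $X$ at position $(i_1,\dots,i_\ell)$, with the answer register split as $(b_1,\dots,b_\ell)$, acts by $b_t\mapsto b_t\oplus x_{i_t}$ for each $t$, which is precisely the composition of $\ell$ queries to the oracle for $x$ (the $t$-th query uses input register $i_t$ and answer register $b_t$). These $\ell$ sub-queries write into distinct registers and compose to exactly the $X$-oracle, so no uncomputation or garbage cleanup is required, and each simulated $X$-query costs exactly $\ell$ queries to $x$. Taking $A$ to be an optimal algorithm distinguishing $H$ from $D_{N,\,r^\ell}$ using $\cost(H,r^\ell)$ queries to error $1/3$, I run $A$ but replace each of its oracle calls by this $\ell$-fold simulation; since the promise is respected, the resulting algorithm distinguishes $G$ from $D_{n,r}$ to the same error $1/3$ using $\ell\cdot\cost(H,r^\ell)$ queries, giving $\cost(G,r)\le\ell\cdot\cost(H,r^\ell)$.

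Finally I would dispatch the boundary case $r\ge n$: there $\cost(G,r)=\infty$ by convention, but $r^\ell\ge n^\ell=N$ also forces $\cost(H,r^\ell)=\infty$, so the inequality holds vacuously; the interesting regime is $r<n$, where $r^\ell<N$ and the reduction above applies. The only step that requires care is the quantum oracle simulation — ensuring the $\ell$-to-one query replacement is genuinely unitary and leaves no entangled garbage — but because the $\ell$ coordinate-queries land in separate answer registers this is immediate rather than a real obstacle. The genuine content is thus just the counting observation that lifting an $r$-symbol string produces an $r^\ell$-symbol string.
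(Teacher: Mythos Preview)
Your proof is correct and follows essentially the same reduction as the paper: lift $x\in[n]^n$ to $X\in[N]^N$ via $X(i_1,\dots,i_\ell)=(x_{i_1},\dots,x_{i_\ell})$, observe this sends $G\to H$ and $D_{n,r}\to D_{N,r^\ell}$, and simulate each $X$-query by $\ell$ queries to $x$. The only difference is that the paper phrases the conclusion through the hard distribution $\mu$ for $G$ (via \lem{cost_minimax}), whereas your worst-case argument bypasses minimax entirely---since the lifting respects the promise pointwise, worst-case error is preserved directly, so your presentation is in fact slightly cleaner.
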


\begin{proof}
Let $Q$ be an algorithm distinguishing $H$ from $D_{n^\ell,r^\ell}$.
Let $\mu$ be the hard distribution for $G$, such that no algorithm
using fewer than $\cost(G,r)$ can distinguish $\mu$ from the
uniform distribution on $G$. Then $\mu$ is a distribution
on $D_{n,r}$. Let $\mu'$ be the distribution on $D_{n^\ell,r^{\ell}}$
that we get by sampling $\alpha\leftarrow\mu$, and returning
$\alpha'$ defined by
$\alpha'(z)=(\alpha(z_1),\alpha(z_2),\dots,\alpha(z_\ell))$
for each $z\in[n]^\ell$ (here we identify $[n^\ell]$ with $[n]^\ell$).
Note that if $\alpha$ has range $r$, then $\alpha'$ has range
at most $r^\ell$.

Then $Q$ distinguishes $\mu'$ from the uniform distribution
on $H$. The latter distribution is the same as what
you get when sampling $\pi$ uniformly from $G$, and returning $\pi'$
defined by $\pi'(z)=(\pi(z_1),\pi(z_2),\dots,\pi(z_\ell))$
for $z$ in the domain of $H$.
This means that $Q$ can be used to distinguish $\mu$ from the
uniform distribution on $G$: all we need is to simulate
every query of $Q$ using $\ell$ queries to the input $\alpha$.
The desired result follows.
\end{proof}

To handle $G^{<\ell>}$, we first observe that restricting the domain
of a group action to some union of its orbits does not decrease its cost.

\begin{lemma}
\label{lem:cutting_into_orbits}
Let $G$ be a group action on $[n]$, and let $S\subseteq[n]$
be a union of orbits of $G$. Let $G'$ be the group action $G$
acting only on $S$. Then $\cost(G',r)\ge\cost(G,r)$.
\end{lemma}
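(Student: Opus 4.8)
The plan is to prove the equivalent inequality $\cost(G,r)\le\cost(G',r)$ by a query-preserving reduction that turns any algorithm distinguishing $G'$ from $D_{m,r}$ into one distinguishing $G$ from $D_{n,r}$, where $m=|S|$ and we identify $S$ with $[m]$. We may assume $r<m$, since otherwise $\cost(G',r)=\infty$ and the inequality is immediate. The first step is to check that $G'$ is well defined and that permutations restrict to permutations: because $S$ is a union of orbits, every $\pi\in G$ sends the orbit of any $i\in S$ into $S$, so $\pi(S)=S$ and hence $\pi|_S$ is a bijection of $S$. Thus $G'=\{\pi|_S:\pi\in G\}$ is a group action on $S$, and under the identification of a permutation with a string, each $\pi\in G$ (a string in $[n]^n$ whose $S$-coordinates already take values in $S$) restricts to the element $\pi|_S\in G'\subseteq[m]^m$.

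Next I would fix a retraction $g\colon[n]\to S$ that is the identity on $S$ (sending the remaining symbols of $[n]\setminus S$ to $S$ arbitrarily), and define the reduction that maps an input $x\in[n]^n$ to the input $x'\in[m]^m$ given by $x'_i\coloneqq g(x_i)$ for $i\in S$. This map has the two properties we need. If $x=\pi\in G$, then its $S$-coordinates already lie in $S$, so $g$ acts as the identity on them and $x'=\pi|_S\in G'$. If instead $x\in D_{n,r}$, then $x$ uses at most $r$ distinct symbols, and applying the fixed function $g$ coordinatewise cannot increase the number of distinct symbols; hence $x'$ uses at most $r$ symbols of $[m]$ and lies in $D_{m,r}$. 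Consequently any quantum algorithm $Q'$ distinguishing $G'$ from $D_{m,r}$ can be composed with this map to distinguish $G$ from $D_{n,r}$.

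The step requiring the most care is verifying that this composition preserves the query count. Each time $Q'$ queries a coordinate $i\in S$ of $x'$ and expects the symbol $x'_i=g(x_i)$, the reduction answers using a single query to $x$ followed by the fixed, input-independent post-processing $g$; implementing the oracle for $x'$ from the oracle for $x$ in this way is a standard simulation (as in the proof of \thm{exponentiation}). Even allowing for a constant-factor overhead in realizing this oracle transformation, the fact that $\cost$ is the worst-case quantum query complexity of a Boolean function means it satisfies amplification, so any such overhead is immaterial. Putting these together, a $\cost(G',r)$-query algorithm for the restricted problem yields an algorithm of the same order for the full problem, giving $\cost(G,r)\le\cost(G',r)$ as claimed. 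The main conceptual point is the orbit hypothesis: it is exactly what guarantees $\pi(S)=S$, so that restriction carries each $\pi\in G$ to a genuine permutation of $S$ (an element of $G'$) rather than to a string that might have small range and thereby be confused with an element of $D_{m,r}$.
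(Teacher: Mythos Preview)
Your proof is correct and follows the same reduction as the paper: take an algorithm $Q$ for the $G'$ vs.\ $D_{|S|,r}$ problem and run it on the $S$-coordinates of the input to solve the $G$ vs.\ $D_{n,r}$ problem. You are actually more careful than the paper on one point---the paper's two-line proof simply says ``make queries only from $[|S|]$'' without addressing that the returned symbol may lie in $[n]\setminus S$ when the input is in $D_{n,r}$; your retraction $g$ handles this cleanly and is the right fix.
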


\begin{proof}
We identify $S$ with $[|S|]$ without loss of generality.
If $Q$ distinguishes $G'$ from $D_{|S|,r}$, then
we can turn it into $Q'$ distinguishing $G$ from $D_{n,r}$ by
having $Q'$ run $Q$ and make queries only from $[|S|]$.
\end{proof}

The fact that $G^{<\ell>}$ does not decrease the cost of $G$ too much
then follows as a corollary of \thm{exponentiation} and \lem{cutting_into_orbits}.

\begin{corollary}\label{cor:angle_power}
Let $G$ be a group action on $[n]$, and let $H$ be the group
action $G^{<\ell>}$. Then $\cost(H,r^\ell)\ge\cost(G,r)/\ell$.
\end{corollary}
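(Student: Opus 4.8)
The plan is to combine the two results we have just proven. The claim is that for $H = G^{<\ell>}$ we have $\cost(H, r^\ell) \ge \cost(G,r)/\ell$. My strategy is to factor the group action $G^{<\ell>}$ through $G^{(\ell)}$ and then apply the orbit-restriction lemma. The key observation is that the domain $D^{<\ell>}$ of distinct $\ell$-tuples is a union of orbits of $G^{(\ell)}$: indeed, applying $\pi \in G$ coordinatewise to a tuple of distinct elements yields another tuple of distinct elements (since $\pi$ is a bijection), and conversely the orbit of any non-distinct tuple stays within the non-distinct tuples. So $D^{<\ell>} \subseteq D^\ell = [n]^\ell$ is $G^{(\ell)}$-invariant, meaning it is a union of orbits of $G^{(\ell)}$.

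Given this, the proof is a short chain. First I would apply \lem{cutting_into_orbits} with the group action $G^{(\ell)}$ acting on $[n]^\ell$ and with $S = D^{<\ell>}$ the union of orbits consisting of distinct tuples; the group action $G^{(\ell)}$ restricted to $S$ is exactly $H = G^{<\ell>}$ by definition. The lemma then gives $\cost(H, r^\ell) \ge \cost(G^{(\ell)}, r^\ell)$, since restricting to a union of orbits does not decrease the cost. Next I would invoke \thm{exponentiation}, which states precisely that $\cost(G^{(\ell)}, r^\ell) \ge \cost(G,r)/\ell$. Chaining these two inequalities yields $\cost(H, r^\ell) \ge \cost(G^{(\ell)}, r^\ell) \ge \cost(G,r)/\ell$, which is the desired conclusion.

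The only point requiring care — and the one I would expect to be the main (though minor) obstacle — is matching up the domain sizes and the identifications between $[n^\ell]$ and $[n]^\ell$ so that the cost measures line up. In particular, \thm{exponentiation} is stated for $G^{(\ell)}$ on the full domain $[n]^\ell$ of size $n^\ell$ with the small-range parameter $r^\ell$, whereas \lem{cutting_into_orbits} is stated for an abstract group action on $[n]$ whose domain is cut to a union of orbits $S$, producing the cost measure $\cost(G', r)$ with $D_{|S|, r}$. I would need to verify that the range parameter $r^\ell$ is carried consistently through the orbit restriction (the lemma preserves the same $r$ on both sides, so here the shared parameter is $r^\ell$) and that identifying $S = D^{<\ell>}$ with $[|S|]$ does not alter the well-shuffling bound. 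These are bookkeeping matters rather than genuine difficulties, since both \thm{exponentiation} and \lem{cutting_into_orbits} have already established the substantive content.
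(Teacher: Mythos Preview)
Your proposal is correct and matches the paper's proof essentially verbatim: the paper likewise notes that $[n]^{<\ell>}$ is a union of orbits of $G^{(\ell)}$ (since a permutation applied entrywise preserves distinctness), then applies \lem{cutting_into_orbits} followed by \thm{exponentiation}. The bookkeeping concerns you raise are handled implicitly in the paper and are, as you say, not genuine obstacles.
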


\begin{proof}
All we need is to note that $G^{<\ell>}$ is the group
action $G^{(\ell)}$ with domain restricted to $[n]^{<\ell>}$,
which is a union of orbits because $\pi\in G^{(\ell)}$ always sends
a tuple with unique entries to another tuple with unique entries
(since a permutation on $[n]$ is applied to each entry).
The desired result then follows from
\thm{exponentiation} and \lem{cutting_into_orbits}.
\end{proof}

We now observe that the transformation $G^{<\ell>}$ immediately allows us to show
that directed graph symmetries are well-shuffling.

\begin{corollary}[Directed graph symmetries]\label{cor:dir_graph}
The set $\mathcal{G}=\{G_k\}_{k\in\bN}$ of all directed graph
symmetries is well-shuffling with power $6$. Here the group action $G_k$
acts on a domain of size $n=k(k-1)$ representing the possible
arcs of a $k$-vertex directed graph, and $G_k$ consists of
all $k!$ permutations on these arcs that act by relabeling the vertices.
\end{corollary}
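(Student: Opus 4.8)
The plan is to recognize the directed graph symmetry $G_k$ as exactly the distinct-tuple power $(S_k)^{<2>}$, and then invoke the machinery already developed. A directed graph on $k$ vertices has its arcs indexed by the ordered pairs $(i,j)$ of distinct vertices, of which there are $k(k-1)=n$; relabeling the vertices by $\pi\in S_k$ sends the arc $(i,j)$ to $(\pi(i),\pi(j))$, acting coordinate-wise on each pair. This is precisely the action of $(S_k)^{(2)}$ restricted to the set $[k]^{<2>}$ of distinct $2$-tuples, i.e.\ $G_k=(S_k)^{<2>}$. So the first step is simply to make this identification explicit and note that the domain sizes match, $|[k]^{<2>}|=k(k-1)=n$.

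Given this identification, the bound follows by plugging $G=S_k$ and $\ell=2$ into \cor{angle_power}, which yields $\cost(G_k,r^2)=\cost((S_k)^{<2>},r^2)\ge\cost(S_k,r)/2$. Combining with the symmetric-group lower bound \thm{zhandry}, namely $\cost(S_k,r)\ge r^{1/3}/C$, gives $\cost(G_k,r^2)\ge r^{1/3}/(2C)$. Writing $s=r^2$ and solving, this reads $\cost(G_k,s)\ge s^{1/6}/(2C)$ whenever $s$ is a perfect square, which is the power-$6$ well-shuffling bound.

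The only loose end is that \cor{angle_power} delivers the bound for range parameters of the form $r^2$ rather than for every $s\in\bN$. I would close this gap using monotonicity of $\cost$ in the range parameter: since $D_{n,s}\subseteq D_{n,s'}$ for $s\le s'$, any algorithm distinguishing $G$ from the larger set $D_{n,s'}$ also distinguishes it from $D_{n,s}$, so $\cost(G,s)$ is non-decreasing in $s$. Hence for arbitrary $s$ I set $r=\lfloor\sqrt{s}\rfloor$, so that $r^2\le s$ and $\cost(G_k,s)\ge\cost(G_k,r^2)\ge r^{1/3}/(2C)$; since $r\ge\sqrt{s}/2$ for all $s\ge 4$ (and the finitely many remaining small cases can be absorbed into the constant), this gives $\cost(G_k,s)\ge s^{1/6}/b$ for a suitable universal $b$, establishing well-shuffling with power $6$.

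I do not expect a genuine obstacle here: all the heavy lifting --- the reduction to the collision problem inside \thm{zhandry}, and the cost-preservation under the $<\ell>$ transformation inside \cor{angle_power} (which itself rests on \thm{exponentiation} and \lem{cutting_into_orbits}) --- has already been carried out. The conceptual crux is purely the clean observation that the arc set of a directed graph is the set of distinct ordered pairs, so that directed graph symmetry is literally the $\ell=2$ distinct-tuple power of the full symmetric group; the only technical nuisance is the perfect-square-to-arbitrary-$s$ extension, handled by the monotonicity argument above.
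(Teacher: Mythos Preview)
Your proposal is correct and follows essentially the same route as the paper: identify $G_k=(S_k)^{<2>}$, apply \cor{angle_power} with $\ell=2$, and finish with \thm{zhandry}. The only difference is that the paper writes $\cost(G_k,r)\ge\cost(S_k,\sqrt{r})/2$ without comment, whereas you explicitly handle the passage from perfect-square range parameters to arbitrary $s$ via the monotonicity of $\cost(G,\cdot)$; this is a small technicality the paper leaves implicit, and your treatment of it is fine.
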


\begin{proof}
This immediately follows by observing that $G_k=S_k^{<2>}$.
To see this, note that the domain of $G_k$ is the set of all ordered pairs
$(x,y)\in [k]$ with $x\ne y$, which is precisely $[k]^{<2>}$,
and the permutations in $G_k$ are just those in $S_k$ applied to
both coordinates, which is precisely relabeling the vertices.
\cor{angle_power} then gives $\cost(G_k,r)\ge\cost(S_k,\sqrt{r})/2$,
which is at least $\Omega(r^{1/6})$ by Corollary~\ref{thm:zhandry}.
\end{proof}

The collection of directed hypergraph symmetries is similarly well-shuffling.

\begin{corollary}[Directed hypergraph symmetries]
The set $\mathcal{G}_p=\{G_k\}_{k\in\bN}$ consisting of all
$p$-uniform directed hypergraph symmetries is well-shuffling
with power $3p$.
\end{corollary}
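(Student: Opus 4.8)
The plan is to follow exactly the strategy of \cor{dir_graph}, generalizing from ordered pairs to ordered $p$-tuples. The key observation is that the $p$-uniform directed hypergraph symmetry $G_k$ is precisely $S_k^{<p>}$: the hyperarcs of a $p$-uniform directed hypergraph on $k$ vertices are exactly the ordered $p$-tuples of distinct vertices, which form the domain $[k]^{<p>}$, and relabeling the $k$ vertices corresponds exactly to applying a permutation from $S_k$ to all $p$ coordinates of each tuple simultaneously. Thus the identification $G_k=S_k^{<p>}$ is immediate from unwinding the definition of $G^{<\ell>}$ with $\ell=p$.

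Given this identification, I would invoke \cor{angle_power} with $G=S_k$, $H=G_k=S_k^{<p>}$, and $\ell=p$, which yields
\[\cost(G_k,r^p)\ge\cost(S_k,r)/p.\]
Substituting $s=r^p$ (so $r=s^{1/p}$) rewrites this as $\cost(G_k,s)\ge\cost(S_k,s^{1/p})/p$. I would then apply the symmetric-group lower bound of \thm{zhandry}, namely $\cost(S_k,t)\ge t^{1/3}/C$ for a universal constant $C$, with $t=s^{1/p}$, to obtain
\[\cost(G_k,s)\ge\frac{(s^{1/p})^{1/3}}{Cp}=\frac{s^{1/(3p)}}{Cp}.\]
This is exactly the inequality $\cost(G_k,s)\ge s^{1/a}/b$ required by the definition of well-shuffling with power $a=3p$ and constant $b=Cp$, and it holds uniformly over all $k\in\bN$.

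Since every step is a direct specialization of machinery already established in the excerpt (\cor{angle_power} for the $G^{<\ell>}$ transformation, and \thm{zhandry} for the symmetric-group base case), there is no genuine obstacle here. The only point requiring care is the combinatorial claim that the hyperarc set and the vertex-relabeling action of a $p$-uniform directed hypergraph coincide with the domain $[k]^{<p>}$ and the coordinatewise $S_k$-action defining $S_k^{<p>}$; this is a routine check of definitions. Consequently the proof reduces to a single observation plus a two-line composition of the two cited results, mirroring the $p=2$ case verbatim.
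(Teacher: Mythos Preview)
Your proposal is correct and follows precisely the paper's approach: the paper's own proof of this corollary is the single sentence ``This follows from the same argument as \cor{dir_graph},'' and what you have written is exactly that argument spelled out with $\ell=p$ in place of $\ell=2$.
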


\begin{proof}
This follows from the same argument as \cor{dir_graph}.
\end{proof}

\subsubsection{Transformation for undirected graphs}

To handle undirected graphs, we introduce yet another operation on
group actions which approximately preserves the cost.

\begin{theorem}\label{thm:classes}
Let $G$ be a group action acting on $[n]$, and let
$S_1,S_2,\dots,S_k\subseteq[n]$ be a partition of $[n]$
into $k$ equal parts, such that for all $\pi\in G$,
all $t\in[k]$, and all $i,j\in S_t$, the outputs
$\pi(i)$ and $\pi(j)$ lie in the same set $S_{t'}$.
Let $H$ be the group action on $[k]$ induced by $G$,
where for each $\pi\in G$ we have $\pi'\in H$
such that $\pi'(t)=t'$ if $i\in S_t$ and $\pi(i)\in S_{t'}$.
Then $\cost(H,r)\ge\cost(G,r)$.
\end{theorem}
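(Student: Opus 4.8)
The plan is to prove the inequality by a reduction in the same spirit as \lem{cutting_into_orbits} and \thm{exponentiation}: I will take a quantum algorithm that distinguishes $H$ from $D_{k,r}$ and convert it, with no increase in the number of queries, into an algorithm that distinguishes $G$ from $D_{n,r}$. Since $\cost$ is by definition the worst-case quantum query complexity of the corresponding distinguishing problem, this yields $\cost(G,r)\le\cost(H,r)$, which is exactly the desired bound.

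The core construction is a string-level map. For each block index $t\in[k]$ fix a representative $s_t\in S_t$ (say $s_t=\min S_t$), and let $\beta\colon[n]\to[k]$ be the function sending an element to the index of the block containing it. Viewing $x$ as a function $[n]\to[n]$, I define the induced string $x'\in[k]^k$ by $x'_t:=\beta(x_{s_t})$. The key is then to verify two properties: (i) if $x$ represents a permutation $\pi\in G$, then $x'$ represents the induced permutation $\pi'\in H$; and (ii) if $x\in D_{n,r}$, then $x'\in D_{k,r}$.

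For (i), the block-preservation hypothesis guarantees that all of $S_t$ is sent by $\pi$ into a single block $S_{t'}$, so $\beta(\pi(s_t))=t'$ is precisely the value $\pi'(t)$ assigned by the induced action (and is independent of the choice of representative). Here I will use that the blocks are equal in size to argue that $\pi'$ is genuinely a bijection of $[k]$: two distinct equal-size blocks cannot both be mapped into one block of the same cardinality, so the induced map on block indices is injective, hence a permutation, and $x'$ really lies in $H$. For (ii), I observe that $x'$ factors as $\beta\circ x$ restricted to the representatives, so the range of $x'$ is contained in $\beta(\{x_{s_t}:t\in[k]\})\subseteq\beta(\operatorname{range}(x))$, a set of size at most $|\operatorname{range}(x)|\le r$; hence $x'$ uses at most $r$ distinct symbols.

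Finally I would assemble the reduction. Given an algorithm $Q$ distinguishing $H$ from $D_{k,r}$ to worst-case error $1/3$, I run $Q$ but answer each of its queries to coordinate $t$ of $x'$ with a single query to the real input $x$ at position $s_t$, followed by the fixed classical post-processing $\beta$ (implemented coherently and uncomputed in the standard way, so the query count is unchanged). By (i) the resulting algorithm accepts every $x\in G$ with probability $\ge2/3$, and by (ii) it accepts every $x\in D_{n,r}$ with probability $\le1/3$, so it distinguishes $G$ from $D_{n,r}$ using $\cost(H,r)$ queries, giving $\cost(G,r)\le\cost(H,r)$. The main point to get right is property (i) — in particular that the induced map on blocks is a permutation, which is exactly where the equal-parts assumption is needed — while the range bound in (ii) and the query-for-query simulation are routine.
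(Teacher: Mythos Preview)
Your proposal is correct and follows essentially the same approach as the paper's proof: fix a representative $i_t\in S_t$ in each block, simulate each query $t$ of the $H$-distinguisher by querying the representative and mapping the answer back to its block index, then observe that this string-level map sends $G$ to $H$ and $D_{n,r}$ to $D_{k,r}$. You supply a bit more detail than the paper (in particular, explicitly using the equal-parts hypothesis to argue that the induced map on blocks is a bijection, and spelling out the range bound for (ii)), but the construction and reasoning are the same.
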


\begin{proof}
Let $Q$ be a quantum algorithm distinguishing $H$ from $D_{k,r}$
using $\cost(H,r)$ queries. We construct
an algorithm $Q'$ for distinguishing $G$ from $D_{n,r}$.
Let $\mu$ be the hard distribution
on $D_{n,r}$ that is hard to distinguish from the uniform
distribution on $G$. The algorithm $Q'$ fixes a unique
$i_t\in S_t$ for each $t=1,2,\dots,k$.
On input $\alpha$ from $G\cup D_{n,r}$,
the algorithm $Q'$ will run $Q$ in the following way:
each query $t\in[k]$ that $Q$ makes will be turned into the query
$i_t\in[n]$ for $\alpha$, and the output $\alpha(i_t)$ will
be converted into the symbol $t'$ such that $\alpha(i_t)\in S_{t'}$
and returned to $Q$. In this way, the algorithm $Q'$ effectively
runs $Q$ on the mapped string $\phi(\alpha)\in[k]^k$, where
$\phi(\alpha)_t$ is the symbol $t'$ such that $\alpha(i_t)\in S_{t'}$.

Now, if $\alpha\in D_{n,r}$, then $\phi(\alpha)\in D_{k,r}$,
while if $\alpha\in G$, we have $\phi(\alpha)\in H$. Since $Q$
distinguishes $H$ from $D_{k,r}$, it follows that $Q'$ distinguishes
$G$ from $D_{n,r}$ using the same number of queries, as desired.
\end{proof}

We are now finally ready to prove the formal version
of \thm{main_graph}, showing that the
collection of (undirected) graph symmetries is well-shuffling.

\begin{definition}[Graph Symmetries]\label{def:graph_sym}
The collection of \emph{graph symmetries} is the set
$\mathcal{G}=\{G_k\}_{k\in \bN}$ of group actions
with $G_k$ acting on $[n]$ with $n=k(k-1)/2$,
such that the domain $[n]$ represents the set
of all possible edges in a $k$-vertex graph,
and $G_k$ acts on these edges and permutes them
in a way that corresponds to relabelling the vertices of the underlying graph.
\end{definition}

\begin{corollary}\label{cor:formal_graph}
The set of all graph symmetries is well-shuffling with power $6$.
Hence $\R(f)=O(\Q(f)^6)$ for functions $f$ symmetric under a graph symmetry.
\end{corollary}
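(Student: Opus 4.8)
The plan is to realize the undirected graph symmetry as the block-induced action of the \emph{directed} graph symmetry, and then invoke \thm{classes} together with \cor{dir_graph}. Recall from \cor{dir_graph} that the directed graph symmetry $G_k^{\mathrm{dir}}=S_k^{<2>}$ acts on the $k(k-1)$ ordered pairs (arcs) $(i,j)$ with $i\neq j$, and that it is well-shuffling with power $6$, i.e.\ $\cost(G_k^{\mathrm{dir}},r)=\Omega(r^{1/6})$. The task is to transfer this lower bound to the undirected setting, where the domain is $\binom{k}{2}$ unordered edges rather than $k(k-1)$ arcs.

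First I would partition the $n'=k(k-1)$ arcs into $m=\binom{k}{2}$ blocks of size $2$, pairing each arc with its reverse: $B_{\{i,j\}}=\{(i,j),(j,i)\}$. These blocks all have equal size $2$, so $n'=2m$ is divided evenly into $m$ equal parts, matching the hypothesis of \thm{classes} (where the ``$k$'' of that theorem plays the role of our $m$). Next I would check that $G_k^{\mathrm{dir}}$ respects this partition: a vertex relabeling $\pi\in S_k$ sends $(i,j)\mapsto(\pi(i),\pi(j))$ and $(j,i)\mapsto(\pi(j),\pi(i))$, so both arcs of $B_{\{i,j\}}$ land in the single block $B_{\{\pi(i),\pi(j)\}}$. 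Thus every block maps into a block, exactly as required.

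With the partition in place, the induced action $H$ on the $m$ blocks is precisely the undirected graph symmetry $G_k$ of \defn{graph_sym}: a block $B_{\{i,j\}}$ corresponds to the undirected edge $\{i,j\}$, and the induced permutation relabels vertices while permuting edges accordingly. Applying \thm{classes} with $G=G_k^{\mathrm{dir}}$ and $H=G_k$ then yields $\cost(G_k,r)\ge\cost(G_k^{\mathrm{dir}},r)=\Omega(r^{1/6})$. Hence the collection of graph symmetries is well-shuffling with power $6$, and \thm{shuffle} immediately gives $\R(f)=O(\Q(f)^6)$ for every $f$ symmetric under a graph symmetry.

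The steps are all routine given the machinery already assembled; the only point requiring a little care is the verification that the block-induced action is genuinely the undirected graph symmetry (rather than some coarser or finer action), and that the blocks are honestly of equal size so that \thm{classes} applies verbatim. I do not anticipate any real obstacle here, since the reverse-arc pairing is exactly the identification of a directed graph's two opposite arcs with a single undirected edge, so the correspondence is forced.
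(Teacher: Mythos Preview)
Your proposal is correct and follows exactly the same argument as the paper: partition the $k(k-1)$ arcs into reverse-arc pairs $\{(i,j),(j,i)\}$, observe that the directed graph symmetry respects this partition so that \thm{classes} applies, identify the induced action with the undirected graph symmetry, and combine the resulting inequality $\cost(G_k,r)\ge\cost(G_k^{\mathrm{dir}},r)$ with \cor{dir_graph} and \thm{shuffle}. Your write-up is in fact more careful than the paper's, explicitly verifying the block-invariance hypothesis of \thm{classes} and spelling out the final appeal to \thm{shuffle}.
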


\begin{proof}
Let $G$ be a directed graph symmetry on domain size $k(k-1)$,
and partition this domain into $k(k-1)/2$ sets of size $2$ of the form
$\{(x,y),(y,x)\}$ for $x,y\in[k]$. Then the induced group action $H$
on these sets (from \thm{classes}) is precisely the undirected graph symmetry
on graphs of size $k$. Since $\cost(H,r)\ge\cost(G,r)$, and since
the directed graph symmetries are well-shuffling with power $6$, it follows
that the undirected graph symmetries are also well-shuffling with power $6$.
\end{proof}

Using similar arguments, we can show a similar result for hypergraphs.
\begin{corollary}
For every constant $p \in \bN$, the collection of all $p$-uniform hypergraph symmetries is well-shuffling.
\end{corollary}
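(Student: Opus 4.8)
The plan is to mimic the proof of the undirected graph case (\cor{formal_graph}) exactly, replacing the factor of $2$ with the number of orderings of a $p$-element set. First I would start from the \emph{directed} $p$-uniform hypergraph symmetry $G$, which acts on the domain of all ordered $p$-tuples of distinct vertices in $[k]$, i.e.\ $G = S_k^{<p>}$. By \cor{angle_power} (applied with $\ell = p$) together with \thm{zhandry}, this directed symmetry is well-shuffling with power $3p$: we have $\cost(G, r^p) \ge \cost(S_k, r)/p = \Omega(r^{1/3})$, so $\cost(G, r) = \Omega(r^{1/(3p)})$.

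Next I would pass from the directed to the undirected hypergraph symmetry using \thm{classes}. The key observation is that an undirected $p$-uniform hyperedge corresponds to an orbit of $p!$ ordered tuples under reordering of coordinates. Concretely, I would partition the domain of $G$ (the ordered $p$-tuples of distinct vertices) into groups of size $p!$, where each group $\{(\sigma(v_1), \dots, \sigma(v_p)) : \sigma \in S_p\}$ consists of all orderings of a fixed $p$-element set $\{v_1, \dots, v_p\}$. I must verify the hypothesis of \thm{classes}: for every $\pi \in G$ (which relabels vertices), any two orderings of the same underlying set are mapped to two orderings of a common image set, since relabelling preserves which tuples share the same underlying vertex set. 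The induced group action $H$ on these $\binom{k}{p}$ parts is then precisely the undirected $p$-uniform hypergraph symmetry. Since \thm{classes} guarantees $\cost(H, r) \ge \cost(G, r)$, the undirected symmetry inherits the well-shuffling property with the same power $3p$.

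The one genuinely new ingredient compared to \cor{formal_graph} is checking that the parts all have equal size and that the partition respects the group action for general $p$, rather than just $p = 2$. This is where I would be most careful: \thm{classes} requires a partition of $[n]$ into $k$ \emph{equal} parts, and here every part has size exactly $p!$ (all orderings of a $p$-subset are distinct because the tuples have distinct entries), so the equal-size condition holds automatically. The respecting-the-action condition is immediate because a single vertex relabelling $\pi$ applied coordinatewise sends the orbit of a set $\{v_1,\dots,v_p\}$ to the orbit of $\{\pi(v_1),\dots,\pi(v_p)\}$, and these images remain distinct. I do not expect any real obstacle here; the main work is purely bookkeeping to confirm these two structural conditions for arbitrary constant $p$.

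Putting the pieces together, I would conclude that for every constant $p \in \bN$ the collection of $p$-uniform hypergraph symmetries is well-shuffling (with power $3p$), and hence by \thm{shuffle} that $\R(f) = O(\Q(f)^{3p})$ for every $f$ symmetric under such a symmetry. Since the whole argument is a direct reduction through \cor{angle_power} and \thm{classes} from the collision-based bound of \thm{zhandry}, the proof can be stated in a single line referring back to the undirected graph argument, exactly as the excerpt's one-sentence proof suggests.
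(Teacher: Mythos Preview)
Your proposal is correct and follows exactly the approach the paper intends: the paper's own proof is literally the one-line remark ``Using similar arguments, we can show a similar result for hypergraphs,'' and what you have written is precisely the natural unpacking of that remark---reduce to $S_k^{<p>}$ via \cor{angle_power} and \thm{zhandry} to get power $3p$, then collapse ordered $p$-tuples to unordered $p$-subsets via \thm{classes} with parts of size $p!$. There is nothing to add.
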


\subsubsection{Transformations for bipartite graphs}

We introduce yet more operations on group actions for the case of bipartite graph
symmetries.

\begin{definition}[Product of group actions]
Let $G_1$, and $G_2$ be two group actions acting on $[n_1]$ and $[n_2]$ respectively.
Then the group product action $G_1\times G_2$ is a group action acting on $[n_1 n_2]$ such that
for any $(\pi_1, \pi_2) \in G_1\times G_2$, and any $k\in[n_1]$ and $\ell\in [n_2]$ we have
$(\pi_1, \pi_2)(k, \ell) = (\pi_1(k), \pi_2(\ell))$.
\end{definition}

(In the above, we identify $[n_1]\times[n_2]$ with $[n_1n_2]$.)

\begin{theorem}
\label{thm:group-product}
For all $G_1, G_2$ acting on $[n_1]$ and  $[n_2]$ respectively and for all $r$, $\cost(G_1\times G_2,r^2)\ge \min\{\cost(G_1,r),\cost(G_2,r)\}$.
\end{theorem}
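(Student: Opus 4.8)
The plan is to adapt the reduction strategy of \thm{exponentiation} to two independent coordinates, combined with a two-step hybrid argument. Suppose toward a contradiction that $\cost(G_1\times G_2,r^2)=T<\min\{\cost(G_1,r),\cost(G_2,r)\}$, and let $Q$ be a quantum algorithm distinguishing $H\coloneqq G_1\times G_2$ from $D_{n_1n_2,r^2}$ using $T$ queries. By \lem{cost_minimax} applied to $G_1$ and to $G_2$ separately, obtain hard distributions $\mu_1$ on $D_{n_1,r}$ and $\mu_2$ on $D_{n_2,r}$, where $\mu_i$ is hard to distinguish from the uniform distribution $\mu_i'$ on $G_i$ using fewer than $\cost(G_i,r)$ queries. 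Writing a point of the domain as a pair $(k,\ell)\in[n_1]\times[n_2]$ (identified with $[n_1n_2]$), I would build strings coordinate-wise: from samples $a_1,a_2$ (each either a uniform permutation from $G_i$ or a draw from $\mu_i$) form the string sending $(k,\ell)\mapsto(a_1(k),a_2(\ell))$. If $a_1\leftarrow\mu_1$ and $a_2\leftarrow\mu_2$ this lies in $D_{n_1n_2,r^2}$, since its range is contained in $\mathrm{range}(a_1)\times\mathrm{range}(a_2)$, of size at most $r^2$; while if $a_1\leftarrow\mu_1'$ and $a_2\leftarrow\mu_2'$ it is exactly the uniform distribution on $H$.

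Next I would interpolate using the three distributions $P_0=\mu_1'\times\mu_2'$ (uniform on $H$), $P_1=\mu_1'\times\mu_2$, and $P_2=\mu_1\times\mu_2$ (supported on $D_{n_1n_2,r^2}$). Since $Q$ distinguishes $H$ from $D_{n_1n_2,r^2}$, its acceptance probability drops by a constant from $P_0$ to $P_2$, so by the triangle inequality it drops by a constant across at least one of the two adjacent steps $P_0\to P_1$ or $P_1\to P_2$. In either case the two adjacent distributions differ in only one coordinate, and I would convert $Q$ into a distinguisher for that coordinate's group action exactly as in \thm{exponentiation}: to distinguish $\mu_2'$ from $\mu_2$ (the $P_0\to P_1$ step) I sample $a_1\leftarrow\mu_1'$ internally and feed $Q$ the string $(k,\ell)\mapsto(a_1(k),y(\ell))$, where $y$ is the unknown input, answering each of $Q$'s queries with a single query to $y$; symmetrically, for the $P_1\to P_2$ step I hardcode a sample from $\mu_2$ and query the first coordinate. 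Each such distinguisher makes only $T$ queries yet has constant advantage between $\mu_i$ and $\mu_i'$, contradicting the hardness guaranteed by \lem{cost_minimax} once $T<\cost(G_i,r)$.

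The main obstacle is bookkeeping the error through the hybrid: the triangle inequality only controls the \emph{sum} of the two adjacent acceptance gaps, so the intermediate value on $P_1$ is not pinned down, and a single adjacent gap of constant size yields only a constant distinguishing \emph{advantage} rather than a two-sided error bound in the exact form of \lem{cost_minimax}. I would handle this the same way \thm{shuffle} does: first amplify $Q$ so that the gap from $P_0$ to $P_2$ is close to $1$, then rebalance the resulting one-coordinate distinguisher into a bounded-error decision algorithm and invoke amplification of the $\cost$ measure (legitimate since $\cost_\epsilon(G_i,r)$ is the worst-case quantum query complexity of a Boolean function and hence independent of the constant $\epsilon$ up to constant factors). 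This yields $\cost(G_1\times G_2,r^2)=\Omega(\min\{\cost(G_1,r),\cost(G_2,r)\})$, which is the content of the theorem.
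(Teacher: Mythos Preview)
Your proposal is correct and follows essentially the same hybrid argument as the paper: build the product string from independent coordinate samples, interpolate between $\mu_1'\times\mu_2'$ and $\mu_1\times\mu_2$ via a single intermediate distribution, and convert the adjacent gap into a one-coordinate distinguisher. The only cosmetic differences are that the paper picks the intermediate $\mu_1\times\mu_2'$ instead of your $\mu_1'\times\mu_2$, and that you are more explicit than the paper about the amplification/rebalancing needed to pass from a constant acceptance-gap distinguisher to a bounded-error algorithm (the paper simply asserts the exact inequality at the end, glossing over the constant you absorb into a big-$\Omega$).
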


\begin{proof}
Let $H=G_1 \times G_2$ and $m =n_1n_2$.
Let $Q$ be an algorithm distinguishing $H$ from $D_{m,r^2}$.
Let $\mu_1$ be the hard distribution for $G_1$, and let $\mu_2$ be
the hard distribution for $G_2$.
Then $\mu_1$ is a distribution on $D_{n_1,r}$ and $\mu_2$
is a distribution on $D_{n_2,r}$.
Let $\mu'$ be the distribution on $D_{m,r}$ that we get by sampling $\alpha_1 \leftarrow\mu_1$, and $\alpha_2\leftarrow\mu_2$ independently, and returning
$\alpha'=(\alpha(z_1),\alpha(z_2))$.
Note that if $\alpha_1$ and $\alpha_2$ have range $r$, then $\alpha'$ has range
at most $r^2$. Now, since $Q$ distinguishes $D_{m,r^2}$ from $G_1\times G_2$,
it must also distinguish $\mu'$ from the uniform distribution over $G_1\times G_2$,
which itself is the product of the uniform distribution on $G_1$ and the
uniform distribution on $G_2$. Let $\nu_1$ be the uniform distribution on $G_1$,
and let $\nu_2$ be the uniform distribution on $G_2$. Consider
the behavior of $Q$ on $\mu_1\times\nu_2$. It must either distinguish
this distribution from $\mu_1\times\mu_2$, or else from $\nu_1\times\nu_2$
(since it distinguishes $\mu_1\times\mu_2$ and $\nu_1\times\nu_2$ from
each other). In the first case, we can construct $Q'$ which
artificially generates the sample from $\mu_1$ and uses $Q$ to distinguish
$\mu_2$ from $\nu_2$. In the second case, we can construct $Q'$ which
artificially generates the sample from $\nu_2$ and uses $Q$ to distinguish
$\mu_1$ from $\nu_1$.
Hence $\cost(G_1\times G_2,r^2)\ge\min\{\cost(G_1,r),\cost(G_2,r)\}$, as desired.
\end{proof}

\begin{corollary}
The collection $\mathcal{G}$
of all bipartite graph symmetries with equal parts is well-shuffling.
\end{corollary}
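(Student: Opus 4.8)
The plan is to recognize the bipartite graph symmetry with equal parts as a product of two copies of the symmetric group action, and then apply \thm{group-product} together with the well-shuffling of $S_k$ established in \thm{zhandry}.

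First I would set up the identification. A bipartite graph with two parts, each of size $k$, has its possible edges indexed by pairs $(i,j)$ with $i$ a left vertex and $j$ a right vertex, so the edge set is $[k]\times[k]$, which we identify with $[k^2]$. Relabeling the $k$ left vertices by a permutation $\pi_1\in S_k$ together with relabeling the $k$ right vertices by $\pi_2\in S_k$ sends the edge $(i,j)$ to $(\pi_1(i),\pi_2(j))$. This is exactly the product action of $(\pi_1,\pi_2)\in S_k\times S_k$ on $[k]\times[k]$. Hence the bipartite graph symmetry $G_k$ on parts of size $k$ is precisely $S_k\times S_k$.

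Next I would invoke \thm{group-product} with $G_1=G_2=S_k$, which gives $\cost(S_k\times S_k,r^2)\ge\min\{\cost(S_k,r),\cost(S_k,r)\}=\cost(S_k,r)$. By \thm{zhandry} we have $\cost(S_k,r)\ge r^{1/3}/C$ for a universal constant $C$ and all $r$, so $\cost(S_k\times S_k,r^2)\ge r^{1/3}/C$ for all $r$. Finally I would re-parametrize. Since $\cost(G,\cdot)$ is non-decreasing in its range parameter (enlarging $D_{n,r}$ only makes the distinguishing problem harder, as any distinguisher for the larger set also distinguishes the smaller one), for any target range parameter $s$ I take $r=\lfloor\sqrt{s}\rfloor$, so that $r^2\le s$ and hence $\cost(S_k\times S_k,s)\ge\cost(S_k\times S_k,r^2)\ge r^{1/3}/C\ge s^{1/6}/b$ for a suitable universal constant $b$ absorbing the rounding. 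Thus $\cost(G_k,s)=s^{\Omega(1)}$, and the collection of bipartite graph symmetries with equal parts is well-shuffling, with power $6$.

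I expect no genuine obstacle here, since this is essentially a one-line consequence of \thm{group-product}. The only points requiring care are, first, correctly identifying the symmetry group as the product $S_k\times S_k$ rather than a larger group: if one also allows swapping the two parts, the group becomes the index-$2$ extension of $S_k\times S_k$ obtained by adjoining the part-swap, and a short additional argument (using that the uniform distribution on the larger group is an equal mixture over the two cosets of $S_k\times S_k$, so a distinguisher for the extension yields one for the product) would be needed to transfer the cost lower bound. Second, the exponent bookkeeping, where the square in the range parameter coming from \thm{group-product} combines with the cube from \thm{zhandry} to produce power $6$; the monotonicity of $\cost$ in $r$ is what makes this combination valid.
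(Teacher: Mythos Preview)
Your proposal is correct and follows exactly the paper's approach: identify the bipartite graph symmetry on equal parts as $S_k\times S_k$, then combine \thm{group-product} with \thm{zhandry}. You have simply made explicit the re-parametrization step (via monotonicity of $\cost(G,\cdot)$ in the range parameter) that the paper leaves implicit, and you correctly obtain power $6$.
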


\begin{proof}
This immediately follow by observing that bipartite graph symmetries are
the symmetries $S_{k}\times S_{k}$.
Then \thm{group-product} and \thm{zhandry} give the desired result. 
\end{proof}

\subsubsection{Other transformations}

We introduce one final transformation, which merges two group actions into one.
This transformation also does not decrease the cost. While we have no direct
application for it, we will mention this transformation in some discussion
in the next section.

\begin{lemma}[Merger]
\label{thm:extension}
Let $G$ and $H$ be two group actions on $[n]$, and let
$F=\langle G,H\rangle$ be the group action on $[n]$ which is the closure
of $G\cup H$ under composition. Then $\cost(F,r)\ge\cost(G,r)$.
\end{lemma}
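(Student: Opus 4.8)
The plan is to prove the inequality by a direct monotonicity argument, exploiting the fact that $G$ is contained in $F$ as a set of permutations. First I would observe that because $F=\langle G,H\rangle$ is the closure of $G\cup H$ under composition and $G$ is already a group, we have $G\subseteq F$; identifying each permutation with the string in $[n]^n$ in which every symbol occurs once (as in the definition of $\cost$), this means $G\subseteq F$ as subsets of $[n]^n$. (It is also worth checking in passing that $F$ really is a group action: since $G\cup H$ is closed under inverse, its composition-closure is closed under inverse as well, so $F$ is a genuine group.)

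Next I would unwind the definition of cost. Let $f_F$ be the Boolean function with domain $F\cup D_{n,r}$ that outputs $1$ on $F$ and $0$ on $D_{n,r}$, so that $\cost(F,r)=\Q(f_F)$, and similarly let $f_G$ be the function with domain $G\cup D_{n,r}$ defining $\cost(G,r)=\Q(f_G)$. Since $G\subseteq F$, the domain $\Dom(f_G)=G\cup D_{n,r}$ is contained in $\Dom(f_F)=F\cup D_{n,r}$, and on this common sub-domain the two functions agree (both equal $1$ on $G$ and $0$ on $D_{n,r}$). Hence $f_G$ is exactly the restriction of $f_F$ to the smaller domain $G\cup D_{n,r}$.

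The main step is then to observe that restricting a partial function to a sub-domain cannot increase its quantum query complexity: any quantum algorithm computing $f_F$ to worst-case error $\epsilon$ in particular computes $f_F$ correctly on every input in $G\cup D_{n,r}$, and is therefore already a valid algorithm for $f_G$ using the same number of queries. This gives $\Q_\epsilon(f_G)\le\Q_\epsilon(f_F)$, i.e.\ $\cost_\epsilon(F,r)\ge\cost_\epsilon(G,r)$; taking $\epsilon=1/3$ yields the claim. This mirrors \lem{cutting_into_orbits}, where a domain restriction on the group action played the analogous role; here it is instead the set of positive instances that grows, but the same ``a larger positive set is at least as hard to accept'' principle applies. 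There is no serious obstacle here: the only point requiring care is to confirm that the domains nest correctly and that $f_F$ and $f_G$ take identical values wherever both are defined, so that an algorithm for the larger distinguishing problem is literally an algorithm for the smaller one.
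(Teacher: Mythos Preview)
Your proposal is correct and follows exactly the paper's approach: the paper's proof is the one-line observation that since $G\subseteq F$, distinguishing $G$ from $D_{n,r}$ is a restriction of distinguishing $F$ from $D_{n,r}$ and hence no harder. You have simply spelled out this monotonicity argument in more detail.
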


\begin{proof}
Since $G$ is a subset of $F$, distinguishing $G$ from $D_{n,r}$
is strictly easier than distinguishing $F$ from $D_{n,r}$.
\end{proof}

\section{Group actions with exponential quantum speedups}
 
In this section, we exhibit some group actions that do allow
super-polynomial quantum speedups. These serve as a barrier
to proving that certain natural classes of group actions are well-shuffling.

To start, note that some
of the most well-known examples of exponential quantum speedups
in query complexity already have some mild symmetries.

\begin{itemize}
\item Period finding (the query task behind Shor's algorithm)
gives a periodic string and asks for the period;
see \cite{Shor94, Cle04} for a full definition. This function
is symmetric under the cyclic group action $Z_n$.

\item Simon's problem \cite{Sim97} promises that the input string
$x$ represents a function with a hidden shift $s\in\B^{\log n}$,
such that $x_i=x_j$ if and only if $i=j$ or $i\oplus j=s$,
and asks to find the hidden shift $s$. It is not hard to convert
this to a decision problem by requiring the function to output
only one bit of information about $s$; in this form,
Simon's problem is symmetric under the group action
which permutes $[n]$ by flipping some bits of the binary representation
of each $i\in [n]$, i.e.\ the group action $Z_2^{\oplus \log n}$.

\item In Forrelation, the input takes the form of two strings $x$ and $y$
of length $n/2$ each, and the task is to estimate the sum
$\sum_{i,j\in[n/2]} (-1)^{\langle i,j\rangle}x_i y_j$;
for a full definition, see \cite{AA15}. Forrelation is symmetric
under the group action which permutes the bits in the binary
representation of each $i$ (the group $S_{\log n}$).

\item Another way to convert
Simon's problem to a decision problem is to define a Boolean function $f$
which outputs $1$ on strings that satisfy the Simon promise
and outputs $0$ on strings that are far from satisfying the promise;
see \cite{BFNR08} for a full definition. This
version of Simon's problem is symmetric under the group action
that can both flip the individual bits of $i$ and permute them,
a group action of order $n\cdot(\log n)!$ which is the merger
of the group actions $S_{\log n}$ and $Z_2^{\log n}$ above.

\end{itemize}

While these examples are not exhaustive, other functions with exponential quantum
speedups tend to have a similar flavor, being symmetric under group actions
which contain only $\poly(n)$ or maybe $n^{O(\log n)}$ permutations
instead of the maximum of $n^{O(n)}$.

This might suggest that in order to get an exponential quantum speedup
we always need mild symmetries, with the order of the group
action being small (compared to the maximum of $n!=n^{\Theta(n)}$).
However, this turns out not to be true.
\thm{large_order_no_speedup_formal} demonstrates that even very large group actions
may still be consistent with exponential quantum speedups.
This means that a characterization of
the group actions which do not allow super-polynomial quantum
advantage must use some richness measure other than the order
of the group action.

\begin{theorem}[Exponential quantum speedup with high symmetry]
\label{thm:large_order_no_speedup_formal}
For infinitely many $n\in\bN$, there is a group action $H_n$ acting on $[n]$
such that $|H_n|=n^{\Omega(n)}$,
and yet there exists a function $f$ that's symmetric under $H_n$
and has $\Q(f)=O(1)$ and $\R(f)=n^{\Omega(1)}$.
\end{theorem}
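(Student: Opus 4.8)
The plan is to exhibit $H_n$ by a \emph{padding} construction: I take a small base function that already has a constant-query quantum speedup, embed it into the first half of the $n$ input positions, and fill the remaining positions with irrelevant ``junk'' coordinates. The group action $H_n$ will then be allowed to permute the junk positions arbitrarily while fixing the informative positions pointwise. Because $H_n$ acts trivially on the coordinates that actually determine $f$, it imposes no real constraint on the function, yet permuting a linear number of junk positions already generates a group of order $n^{\Omega(n)}$. This is exactly the phenomenon the theorem advertises: a huge symmetry group whose richness is ``wasted'' on bits the function ignores.

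Concretely, I would set $m=\lfloor n/2\rfloor$ and $N=n-m$, and let $g\colon\Dom(g)\to\B$ be the Forrelation function on $m$ input coordinates over the alphabet $\nB$, for which $\Q(g)=O(1)$ and $\R(g)=\tOmega(\sqrt m)$ \cite{AA15}. Writing $A=\{1,\dots,m\}$ and $B=\{m+1,\dots,n\}$, I define $f$ on $\nB^n$ by $\Dom(f)=\{x\in\nB^n : x|_A\in\Dom(g)\}$ and $f(x)=g(x|_A)$. I let $H_n$ be the subgroup of permutations of $[n]$ that fix every element of $A$; equivalently $H_n\cong S_N$ acting on $B$. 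For $\pi\in H_n$ and $i\in A$ we have $\pi(i)=i$, so $(x\circ\pi)|_A=x|_A$; hence $f(x\circ\pi)=g(x|_A)=f(x)$ and the domain is preserved, so $f$ is symmetric under $H_n$. The order is $|H_n|=N!\ge (N/e)^N=n^{\Omega(n)}$, using that $N=\Theta(n)$ (so $\log(N!)=\Theta(n\log n)$).

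It then remains to read off the complexities. For the quantum upper bound, running the $O(1)$-query algorithm for $g$ on the coordinates in $A$ (and never querying $B$) computes $f$, so $\Q(f)\le\Q(g)=O(1)$. For the randomized lower bound, fixing the junk coordinates to a constant and simulating shows that any randomized algorithm for $f$ yields one for $g$ using no more queries, so $\R(f)\ge\R(g)=\tOmega(\sqrt m)=\tOmega(\sqrt n)=n^{\Omega(1)}$. This produces the desired $H_n$ together with its witnessing function $f$.

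I do not anticipate a genuine technical obstacle: the content here is conceptual rather than computational. The one point that must be checked with care is the balance between two competing requirements — the number of junk positions $N$ must be $\Theta(n)$ in order to force $|H_n|=n^{\Omega(n)}$, while the number of informative positions $m$ must still be $n^{\Omega(1)}$ so that the inherited randomized lower bound $\R(f)=n^{\Omega(1)}$ survives — and the choice $m=\lfloor n/2\rfloor$ satisfies both simultaneously. (Any base function with $\Q=O(1)$ and $\R=m^{\Omega(1)}$, not just Forrelation, would serve equally well.)
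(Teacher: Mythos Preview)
Your proof is correct, but it takes a different route from the paper's. The paper builds $f=\textsc{For}_{\sqrt{n}}\circ\triv_{\sqrt{n}}$, composing Forrelation on $\sqrt{n}$ inputs with $\sqrt{n}$ copies of the trivial promise function $\triv_{\sqrt{n}}$ (which maps $0^{\sqrt{n}}\mapsto 0$ and $1^{\sqrt{n}}\mapsto 1$); the group action is the product $(S_{\sqrt{n}})^{\sqrt{n}}$ permuting within each of the $\sqrt{n}$ blocks, which also has order $n^{\Theta(n)}$. Your padding construction is more elementary --- you simply adjoin irrelevant coordinates and let $H_n$ act freely on them --- and it even yields a stronger randomized lower bound ($\tOmega(\sqrt{n})$ versus the paper's $\tOmega(n^{1/4})$). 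The paper's construction has the mild aesthetic advantage that every input bit participates in the promise (each block must be constant), whereas your junk bits are entirely ignored by $f$; but for the theorem as stated this makes no difference, and both constructions equally illustrate the intended moral that group order alone cannot characterize quantum-intolerant symmetries.
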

\begin{proof}
Consider the function $f=\textsc{For}_{\sqrt{n}}\circ\triv_{\sqrt{n}}$ that we obtain by composing the Forrelation function $\textsc{For}$
(as defined in \cite{AA14}) of input size $\sqrt{n}$
with $\sqrt{n}$ copies of the trivial function $\triv$ of size $\sqrt{n}$.
The function $\triv_m$ is a promise problem that only takes two inputs, $0^m$ and
$1^m$. It outputs $0$ on $0^m$ and $1$ on $1^m$.

Observe that $\R(\triv)=\Q(\triv)=1$ (even for computing this function exactly).
From \cite{AA14}, we have
$\Q(\textsc{For}_m)=1$ and $\R(\textsc{For}_m)=\tOmega(\sqrt{m})$.
By composing the quantum algorithm for $\textsc{For}_{\sqrt{n}}$
with the exact quantum algorithm for $\triv$, we get that $\Q(f)=1$.
On the other hand, by taking the hard distribution for
$\R(\textsc{For}_{\sqrt{n}})$ and replacing each $0$ of the input with
$0^{\sqrt{n}}$ and each $1$ with $1^{\sqrt{n}}$,
we clearly get a distribution over inputs to $f$
that is hard for randomized algorithms; it follows that $\R(f)=\tOmega(n^{1/4})$.

On the other hand, we claim that $f$ is highly symmetric. Indeed, each
copy of $\triv_{\sqrt{n}}$ is symmetric under the group action $S_{\sqrt{n}}$,
which has size $\sqrt{n}^{\Theta(\sqrt{n})}=n^{\Theta(\sqrt{n})}$.
Since there are $\sqrt{n}$ copies of $\triv_{\sqrt{n}}$, they are together
symmetric under $S_{\sqrt{n}}\times S_{\sqrt{n}}\times\dots\times S_{\sqrt{n}}=
\left(S_{\sqrt{n}}\right)^{\sqrt{n}}$, a group action of order
$\left(n^{\Theta(\sqrt{n})}\right)^{\sqrt{n}}=n^{\Theta(n)}$.
\end{proof}

This theorem tells us that while we would like to say that sufficiently ``rich''
group actions do not allow exponential quantum speedups, such a richness notion
cannot simply be the order of the group. We note that such a richness notion
also cannot be transitivity: some $1$-transitive group actions allow
exponential quantum speedups and some don't.
For example, as we have seen, graph properties are $1$-transitive
and yet functions symmetric under graph property group actions do not exhibit
exponential quantum speedups. On the other hand, cyclic group actions are
$1$-transitive and they are consistent with exponential quantum speedups
(e.g.\ period finding). Characterizing richness necessary for a group
action to disallow exponential quantum speedups remains a fascinating open
problem.

\section*{Acknowledgements}

We thank Scott Aaronson for many helpful discussions.

\appendix

\section{Proof of the quantum minimax lemma}\label{app:minimax}

We prove \lem{minimax}, which we restate below.

\minimax*

\begin{proof}
By \cite{BSS03}, there is a finite bound $B$ expressible in terms of
$n$ and $|\Sigma|$ on the necessary size of the work space register for a quantum
algorithm solving $f$ to error $\epsilon$. This means the quantum query
algorithms we deal with can be assumed without loss
of generality to have work space size $B$. A quantum algorithm making $T$ queries
can be represented as a sequence of $T$ unitary matrices of size
upper bounded by $B$; this can be arranged as a finite vector of complex numbers.
It is not hard to see that the set of all
such valid quantum algorithms is a compact set.

For a quantum algorithm $Q$, let $\err(Q,x)$ denote the error $Q$ makes
when run on input $x\in\Dom(x)$; this is $\Pr[Q(x)\ne f(x)]$, where
$Q(x)$ is the random variable for the measured output of $Q$ when run on $x$.
We note that $\err(Q,x)$ is a continuous function of $Q$.
Let $v_Q$ be the vector in $\bR^{|\Dom(f)|}$ defined by
$v_Q[x]\coloneqq\err(Q,x)$. Then $v_Q$ is a continuous function of $Q$.
Further, let $V$ be the set of all such vectors $v_Q$ for valid quantum
algorithms $Q$ which make at most $\Q_\epsilon(f)-1$ queries.
Since the set of such valid quantum algorithms is compact and since $v_Q$
is continuous in $Q$, we conclude that $V$ is compact.
Furthermore, we claim that $V$ is convex: this is because for any two
quantum algorithms $Q$ and $Q'$, there is a quantum algorithm $Q''$ which
behaves like their mixture (in terms of its error on each input $x$).

Next, let $\Delta\subseteq\bR^{|\Dom(f)|}$ be the set of all probability
distributions over $\Dom(f)$. Then $\Delta$ is also convex and compact.
Finally, define $\alpha\colon V\times\Delta\to\bR$ by
$\alpha(v,\mu)\coloneqq\bE_{x\leftarrow\mu}v[x]=\sum_{x\in\Dom(f)}\mu[x]v[x]$.
Then $\alpha$ is continuous in each coordinate, and is \emph{saddle}:
that is, $\alpha(\cdot,\mu)$ is convex for each $\mu\in\Delta$
(indeed, it is linear),
and $\alpha(v,\cdot)$ is concave for each $v\in V$ (indeed, it is also linear).
A standard minimax theorem (e.g.\ \cite{Sio58}) then gives us
\[\min_{v\in V}\max_{\mu\in\Delta}\alpha(v,\mu)
=\max_{\mu\in\Delta}\min_{v\in V}\alpha(v,\mu).\]
For the left hand side, it is clear that the maximum over $\mu$ (once the vector $v$
has been chosen) is the same as the maximum over $x\in\Dom(f)$ of $v[x]$.
This makes the left hand side the minimum over $v\in V$ of $\|v\|_{\infty}$,
or equivalently, the minimum worst-case error of quantum algorithms making at most
$\Q_\epsilon(f)-1$ queries. By the definition of $\Q_\epsilon(f)$, this minimum
must be strictly greater than $\epsilon$ (or else $\Q_\epsilon(f)$ would be smaller).
Hence the left hand side is strictly greater than $\epsilon$.

Looking at the right hand side, we see that we get a single distribution
$\mu$ such that every quantum algorithm $Q$ making at most $\Q_\epsilon(f)-1$
queries must make error against $\mu$ which is greater than $\epsilon$, as desired.
\end{proof}

\bibliographystyle{alphaurl}
\phantomsection\addcontentsline{toc}{section}{References} 
\renewcommand{\UrlFont}{\ttfamily\small}
\let\oldpath\path
\renewcommand{\path}[1]{\small\oldpath{#1}}
\bibliography{symmetric}

\end{document}